\documentclass[10pt, journal, doublecolumn]{IEEEtran}

%
\ifCLASSINFOpdf
  \usepackage[pdftex]{graphicx}
 \graphicspath{{../figures/}{../jpeg/}}
\else
\fi

%
\usepackage[cmex10]{amsmath}
\usepackage{amsfonts}
\usepackage{amsthm}
\usepackage{cite}
\usepackage{cases}
\usepackage{caption}
\usepackage{dsfont}
\usepackage{mathrsfs}
\usepackage{subfigure}

\usepackage{color}

\DeclareMathOperator{\arccot}{arccot}

\hyphenation{op-tical net-works semi-conduc-tor}

\theoremstyle{proposition}
\newtheorem{remark}{Remark}
\theoremstyle{proposition}

\theoremstyle{proposition}
\newtheorem{theorem}{Theorem}
\theoremstyle{proposition}
\newtheorem{lemma}{Lemma}

\begin{document}

\title{Information Exchange in Randomly Deployed Dense WSNs with Wireless Energy Harvesting Capabilities}
%
%
%

\author{Prodromos-Vasileios~Mekikis, \IEEEmembership{ Student Member, IEEE},
Angelos Antonopoulos, 
Elli Kartsakli, \IEEEmembership{ Senior Member, IEEE},
Aris S. Lalos, \IEEEmembership{ Member, IEEE},
Luis Alonso,
Christos Verikoukis, \IEEEmembership{ Senior Member, IEEE}
\thanks{
This work was supported by the projects Cellfive (TEC2014-60130-P), AGAUR (2014-SGR-1551) and ESEE(324284).

P.-V. Mekikis, E. Kartsakli,  A. S. Lalos and L. Alonso are with the Department of Signal Theory and Communications (TSC), Technical
University of Catalonia (UPC), Spain, (e-mail: \{vmekikis, ellik, aristeidis.lalos, luisg\}@tsc.upc.edu).

A. Antonopoulos and C. Verikoukis are with the Telecommunications Technological Centre of Catalonia (CTTC), Spain, (e-mail: \{aantonopoulos, cveri\}@cttc.es). }}
%
%

\markboth{}%
{Shell \MakeLowercase{\textit{et al.}}: Bare Demo of IEEEtran.cls for Journals}
%



\maketitle
\thispagestyle{empty}
\IEEEdisplaynotcompsoctitleabstractindextext
\begin{abstract}
As large-scale dense and often randomly deployed wireless sensor networks (WSNs) become widespread, local information exchange between co-located sets of nodes may play a significant role in handling the excessive traffic volume. Moreover, to account for the limited life-span of the wireless devices, harvesting the energy of the network transmissions provides significant benefits to the lifetime of such networks. In this paper, we study the performance of communication in dense networks with wireless energy harvesting (WEH)-enabled sensor nodes. In particular, we examine two different communication scenarios (direct and cooperative) for data exchange and we provide theoretical expressions for the probability of successful communication. Then, considering the importance of lifetime in WSNs, we employ state-of-the-art WEH techniques and realistic energy converters, quantifying the potential energy gains that can be achieved in the network. Our analytical derivations, which are validated by extensive Monte-Carlo simulations, highlight the importance of WEH in dense networks and identify the trade-offs between the direct and cooperative communication scenarios.
\end{abstract}

\begin{IEEEkeywords}
Wireless Sensor Networks, Cooperative Networks, Wireless Energy Harvesting, Dynamic Power Splitting, Realistic RF-to-DC Conversion Efficiency, Stochastic Geometry
\end{IEEEkeywords}

%
\IEEEpeerreviewmaketitle

\section{Introduction}\label{intro}

\IEEEPARstart{D}{espite} their limited processing and energy capabilities, wireless sensor networks (WSNs) apply in an increasing number of domains, such as environmental monitoring \cite{critical}, mobile healthcare \cite{health} and intelligent transportation systems \cite{its}. With the introduction of new paradigms, such as Machine-to-Machine communication and Internet of Things, the number of wireless nodes in WSNs increases constantly, creating large-scale and dense randomly deployed networks. In such networks, the interference and the excessive traffic can significantly affect the quality of service (QoS) and, consequently, the network lifetime. Therefore, although in typical WSN scenarios the information collected by the sensors is forwarded through the network to a central control station (sink) for centralized handling and decision-making, recent applications in dense networks drive the need for local data exchange among the nodes. To that end, many works have been motivated to consider the use of distributed algorithms that encourage data processing on the node side. Distributed estimation \cite{disest}, distributed clustering \cite{clust} and distributed data storage \cite{dds} are among the applications that support local data exchange and processing to improve the network performance and the energy efficiency. Hence, the design of effective schemes that enable neighboring nodes to exchange messages and apply distributed algorithms locally is becoming of considerable importance \cite{tvtbidi}.

Given the dense deployment, it is very probable that surrounding nodes overhear the transmissions of the network and are willing to assist the communication by acting as relays. This concept, known as cooperation \cite{sendo}, can provide noteworthy gains in the communication and was initially studied in small-scale networks where the relays are deployed in favorable positions (e.g., in between the transmitting nodes) \cite{qosrelay}. However, in large-scale networks, i) relay selection needs considerable overhead and signaling \cite{relsel} and ii) it is hard to maintain a favorable position of the relays for every pair of the randomly deployed transmitting nodes. Nevertheless, although cooperation cannot always guarantee notable performance gains in large-scale dense networks \cite{notcoop}, it is possible to achieve diversity gains that increase the network reliability.

Besides, due to the limited human intervention for practical matters (e.g., replacing batteries), energy efficient communication becomes an essential concern in the design of large-scale networks. Although cooperation can improve the energy efficiency of a WSN, there are more effective ways to extend the network lifetime, which is a key parameter of a WSN and strongly depends on the limited-capacity batteries. Currently, a popular and drastic way to prolong the network operation is by harvesting energy from the environment to either power entirely the sensor nodes or extend the lifetime of the existing batteries \cite{wsneh1,wsneh2,wsneh3}. In this new paradigm, which is broadly known as energy harvesting (EH), the most typical energy sources are solar, thermal, wind and kinetic energy. Recently, wireless energy harvesting (WEH) \cite{weh} has emerged as an alternative approach to harvest the energy of the electromagnetic radiation (EMR) from the network transmissions without the need of expensive hardware systems. WEH can be adopted even in cases where the aforementioned energy sources are scarce or unstable due to their dependence on stochastic events like the weather conditions. This constitutes it a reasonable and straightforward method to extend the lifetime of the wireless nodes and, consequently, of the whole network. 
 
Due to the dependence of the energy conversion efficiency of the harvester on the amount of received EMR \cite{rectenna,rectenna2}, the benefits from WEH are marginal for small-scale network applications, but interestingly high for large-scale dense networks. Ideally, with WEH, it would be possible to improve vastly the network performance by simultaneously transferring information and harvesting all the power. However, since the reuse of the whole received signal both from the energy harvester and the information receiver is not yet possible, various methods have been proposed in order to facilitate WEH \cite{editorgen}. In the class of these techniques, dynamic power splitting (DPS) \cite{editor} has been proved to be among the most efficient approaches that facilitates simultaneous message decoding and energy harvesting. Using DPS, it is possible to dynamically share the received energy between the information decoder and the energy harvester, according to the channel condition that is assumed to be known at the receiver.

To that end, several studies that consider large-scale networks with WEH have lately appeared in the literature \cite{khemr,emerg,krik,mekikis}. In his pioneer work \cite{khemr}, Huang studies the network throughput in a basic mobile ad hoc scenario, where the communication between the transmitter and the receiver is conducted through an ideal wireless channel (i.e., no path loss is assumed in the link). It is worth noting that, although some of the potential benefits of the WEH technology are identified in \cite{khemr}, the results cannot be generalized for cooperative communications. Particularly, in cooperative scenarios, the existence of relay nodes imply a volatile and complex environment that requires a dedicated study. Similarly, in \cite{emerg}, Guo and Wang study the effects of WEH in a direct communication scenario. Nevertheless, the analysis is based on specific physical layer configurations, since the authors provide closed-form expressions for the QoS metrics only for specific path loss conditions, i.e., a particular value for the path loss exponent. However, the range of values that the path loss exponent can have in different environments stresses the need for theoretical expressions that provide general and environment-independent solutions. Recently, an interesting approach has been presented in \cite{krik} by Krikidis, where the coverage of a large-scale network is studied, while the receivers employ a technique for simultaneous information and energy transfer. The author provides incentives for cooperation, highlighting the possible benefits, however the proposed model considers fixed distances between preassigned nodes. In addition, the model assumes a constant energy conversion efficiency for the harvester, although in realistic implementations the efficiency depends on the input power. In the same context, the work in \cite{mekikis} studies a bidirectional scenario with nodes that harvest EMR with a constant energy conversion efficiency. The authors provide important insights into the probability of data exchange in such scenarios, but there is no analysis with regard to the end-to-end network performance, which is essential for the evaluation of the proposed model. In addition, the possibility of direct communication among the randomly deployed nodes is neglected, as only cooperative operation is considered.

In this paper, we study the impact of WEH using DPS on the information exchange in large-scale networks. We consider two sets of sources that exchange their messages either directly or via randomly deployed relay nodes. As performance gains from cooperation are not always guaranteed in dense networks, it is interesting to investigate the potential benefits of cooperation in a WEH-enabled dense network. In addition, we employ a realistic model for the WEH conversion efficiency of the receivers \cite{rectenna2}. Our contribution can be summarized in the following points:

\begin{itemize}
  \item We analytically derive the probability of successful data exchange, while taking into account DPS.
  \item In order to demonstrate the potential energy gains of WEH, we analytically estimate the network lifetime with and without WEH. We assume a variable and, thus, realistic energy conversion efficiency for the harvester to comply with state-of-the-art rectennas.
   \item We provide theoretical expressions for a well-established end-to-end QoS performance metric, namely the spatial throughput, and derive theoretically the optimal intensity that maximizes the network lifetime.
  \item We conduct an extensive performance assessment for the two schemes (direct and cooperative), which reveals intriguing trade-offs that provide useful insights for the design of WSNs with WEH.
\end{itemize}

The rest of the paper is organized as follows. Section \ref{section_systemmodel} describes the system model and the communication scenarios. Section \ref{section_pex} presents the analysis for the probability of successful message exchange. Section \ref{section_lifetime} includes the theoretical expressions of the average network lifetime for the different scenarios, while, in Section \ref{section_metrics}, we present useful performance metrics. Section \ref{section_results} presents the model validation and the numerical results. Finally, Section \ref{section_conclusion} concludes the paper.
\section{System Model}\label{section_systemmodel}
\subsection{Network and Channel Model}
We consider a large-scale network consisting of two sets of source nodes $S_1=\{s_{11},\dots,s_{1i}\}$, $S_2=\{s_{21},\dots,s_{2j}\}$ and a set of ambient nodes acting as relays $R=\{r_{1},\dots,r_{k}\}$ in two different communication scenarios: i) direct, where the sets of source nodes exchange messages directly, and ii) cooperative, where the randomly deployed relays $R$ assist $S_1$ and $S_2$ to the message exchange. In cases where it is convenient, a set of sources will be denoted as $S_\varphi$, $\varphi\in\{1,2\}$ while $S_{\hat{\varphi}}$ will denote the complementary set (i.e., when $S_\varphi=S_2$ then $S_{\hat{\varphi}}=S_1$ and vice versa). The relays are assumed to be other sensor nodes or other type of devices (e.g., smartphones with dedicated interface for relaying). The different sets of sources measure different phenomena and broadcast their measurements. More specifically, each individual source node receives a local measurement, either directly or cooperatively from the nearest node of the other type (i.e., nearest-neighbor model \cite{nearestmodel}). Consequently, each node is required to be aware of the location of itself and of its neighbors, via localization schemes that act in higher network layers\cite{loc}.

All nodes are identical and assumed to be moving on the same Euclidean plane. They are represented by three independent homogeneous PPPs, a reasonable approach for WSNs according to \cite{sgtutorial}. The locations of the sources $S_1$ are described by the PPP $\Phi_{S_1}=\{x_1,\dots , x_i\}$ with intensity $\lambda_1$, where $x_i$, $\forall i \in \mathbb{N}$, denotes the location of the source $s_{1i}$ on the plane $\mathbb{R}^2$. Similarly, the location of the sources $S_2$ on $\mathbb{R}^2$ are represented by the PPP $\Phi_{S_2}=\{y_1,\dots , y_j\}$ with intensity $\lambda_2$, where $y_j$, $\forall j \in \mathbb{N}$ denotes the location of the source $s_{2j}$. For the modeling of the relay nodes, there is an additional PPP $\Phi_R=\{z_1,\dots , z_k\}$ with intensity $\lambda_R$, which represents the location $z_k$, $\forall k \in \mathbb{N}$, of the relay $r_k$. 

For our analysis, without loss of generality, we assume that the respective receiving node in each slot is located at the origin (Slyvnyak's theorem \cite{book}). The received power $P_{R}$ at a node located in a distance $d$ from the transmitting node is $P_{R}=P_thd^{-\alpha}$, where $P_t$ is the transmission power of the nodes, $\alpha>2$ is the path loss exponent and $h$ is the square of the amplitude fading coefficient (i.e., the power fading coefficient) that is associated with the channel between the nodes. We also assume that the fading coefficients are independent and identically distributed (i.i.d.). Moreover, the amplitude fading $\sqrt{h}$ is Rayleigh with a scale parameter $\sigma=1$, hence $h$ is exponentially distributed with mean value $\mu$. The channel is assumed to remain constant in one time slot (i.e., a time period in which a transmission takes place).

\begin{figure}[!t]
\includegraphics[width=0.8\columnwidth]{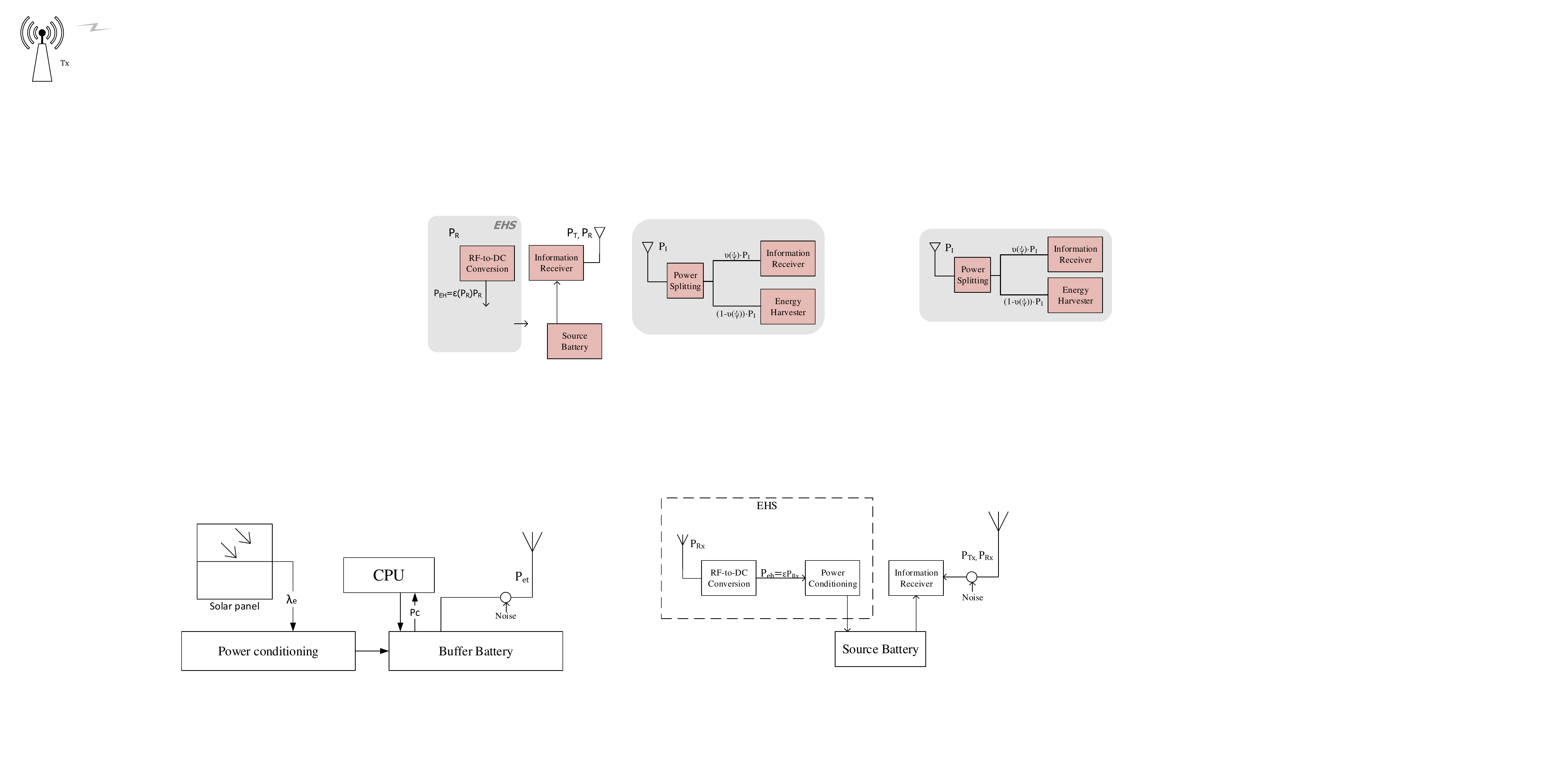}
\centering
\caption{Schematic of a node at reception mode. The received power is dynamically split based on the rule given in \eqref{powersplitfunction}.}
\label{ehs}
\end{figure}

All nodes are powered by a battery with initial energy level $B_I$ and in every time slot consume energy to communicate (i.e., $P_t$ power is consumed for transmission and $P_r$ for reception). Also, they are capable of WEH using a power splitter that dynamically adjusts the power ratio that is allocated to the information receiver and the energy harvester, i.e., DPS \cite{editor}. A simplified illustration of a node is provided in Fig. \ref{ehs}, where the various parts of the node are shown. A node is able to recharge its battery by harvesting the EMR energy from the transmissions of the sources and the relays in the network. According to DPS, the splitting depends on the channel condition and it is described by the following rule:
\begin{numcases}{v(\psi)=}\label{powersplitfunction}
    1, &  if $h<\psi$ \nonumber\\
    \frac{\psi}{h}, &  if $h\geq \psi$
 \end{numcases}
 where $h$ is the power fading coefficient of the channel between the receiver and the nearest transmitter and $\psi$ is a parameter that defines the amount of power that is split between the energy harvester and the information receiver. Later in this paper, we provide an empirical method to choose the value of the $\psi$ parameter. In addition, it is assumed that $h$ is known at the receiving node, but unknown to the transmitter. According to \eqref{powersplitfunction}, when the channel conditions are poor, all of the received signal is being fed to the information receiver.  On the contrary, when the channel conditions are satisfactory for the information receiver, then a fraction of the received power equal to $(1-\frac{\psi}{h})\in[0,1]$ is being fed to the energy harvester without deteriorating the communication performance. At this point, we should mention that the employed DPS technique does not necessarily provide optimal performance in terms of harvested energy for our interference-limited system. However, it is a novel technique that considers the impact of fading and, thus, avoids compromising the communication performance.

\begin{figure}[!t]
\includegraphics[width=0.9\columnwidth]{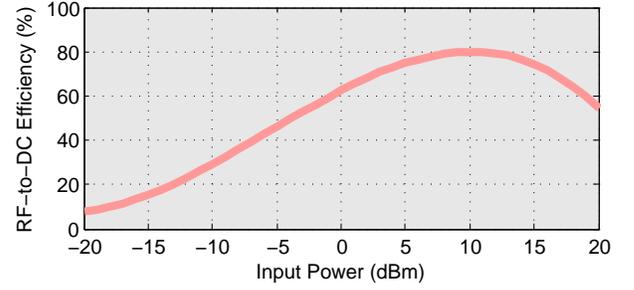}
\centering
\caption{Behavior of the RF to DC efficiency of a rectenna.}
\label{rfdceff}
\end{figure}

Furthermore, the conversion efficiency of the radio frequency (RF) energy into direct current electricity is denoted by $\epsilon$.  As the conversion efficiency of a rectenna depends on the received power \cite{rectenna,rectenna2}, we adopt a variable conversion efficiency $\epsilon$ modeled as a quadratic polynomial that captures the behavior of state-of-the-art rectennas \cite{rectenna,rectenna2}, as in Fig. \ref{rfdceff}, given by
\begin{equation}
\epsilon(P_I)=a_3P_I^3+a_2P_I^2+a_1P_I+a_0,\label{conveff}
\end{equation}
where $P_I$ in Watts is the input power or the total received power, which consists of the received signal and the interference, while $a_3,a_2,a_1,a_0$ are the coefficients of the cubic polynomial. 

After taking into account DPS, a message is considered to be successfully decoded at a receiver when the signal-to-interference-plus-noise ratio (SINR) from its nearest transmitter, denoted as $\gamma$, is higher than a threshold $\gamma^*$; otherwise the message is dropped \cite{physmod1}. The SINR of a mobile node located at the origin at a distance $d$ from its nearest transmitter is
\begin{equation}
\gamma=\frac{v(\psi)P_thd^{-\alpha}}{v(\psi)I_d+N},\label{sinr}
\end{equation}
where $I_d$ is the aggregated interference caused by the transmitter's PPP, defined as $I_d=\sum_{x\in\Phi}P_th_xx^{-\alpha}$ and $N$ is the additive white Gaussian noise power, modeled as a constant zero mean Gaussian Random Variable (RV).

\begin{figure*}[!t]
\includegraphics[width=2\columnwidth]{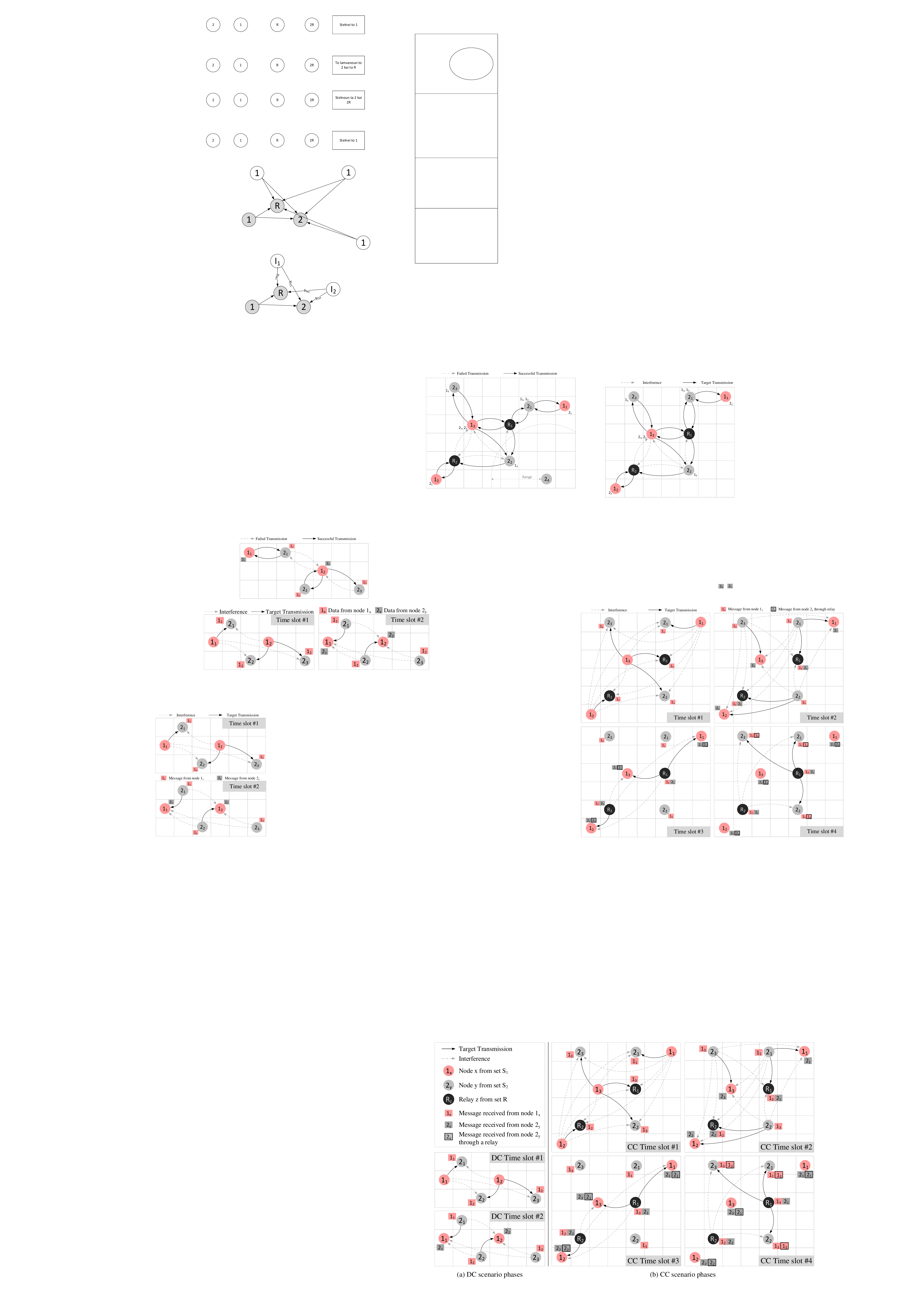}
\centering
\caption{Communication phases. (a) DC scenario phases: i) Slot 1 ($S_1\rightarrow S_2$), ii) Slot 2 ($S_2\rightarrow S_1$), (b) CC scenario phases: i) Slot 1 ($S_1\rightarrow S_2, R$), ii) Slot 2 ($S_2\rightarrow S_1, R$), iii) Slot 3 with active relay ($S_1\leftarrow R$), iv) Slot 4 with active relay ($S_2\leftarrow R$)}
\label{comphases}
\end{figure*}

\subsection{Communication Model}
The time is divided into time slots of fixed duration $t_s$, in which the transmission of one packet can take place. The time needed for the two sets of sources to exchange messages is called communication period (CP). Each CP consists of $g$ time slots, depending on the communication scheme, as we will describe in detail next. 

\subsubsection{Direct communication scenario (DC)}
In the DC scenario, illustrated in Fig. \ref{comphases}a, the CP consists of two time slots (i.e., $g_{DC}=2$) of duration $t_s$. In the first time slot, each $S_1$ source is broadcasting its message and each $S_2$ source attempts to decode the message of its nearest $S_1$ source. The rest transmissions of the $S_1$ sources are considered as interference for the $S_2$ source. However, when the circumstances allow it (i.e., $h\geq\psi$), this interference is beneficial for the network, because a part of it is harvested. In the second time slot, the system follows a similar procedure and each $S_1$ source attempts to decode a message from its nearest $S_2$ source. In the end of the CP, all source nodes have attempted to decode a message from their nearest node of the other type, as it is depicted in Fig. \ref{comphases}a (i.e., small rectangular next to each node). In the second time slot of this figure, it can be noticed that node $2_3$ has attempted to decode the message from its nearest node $1_2$, although the latter has attempted to decode the message of its nearest $S_2$ node, i.e., $2_2$. Therefore, there are not always certain pairs in the network, as it happens with nodes $1_1$ and $2_1$. In this way, all nodes manage to receive a message from their nearest neighbor, which is the goal in such scenarios.

\subsubsection{Cooperative communication scenario (CC)}
In the CC scenario, illustrated in Fig. \ref{comphases}b, the CP consists of four time slots (i.e., $g_{CC}=4$). Similar to the DC scenario, in the first two slots, the $S_1$ and $S_2$ sources are attempting to decode the message from their nearest neighbor of the other type. However, in this scenario, there are also relays distributed on the plane that attempt to decode the messages from their nearest source nodes to assist the communication. Therefore, in the following two time slots, the relays are consecutively broadcasting the messages of their nearest $S_1$ and $S_2$ node. In this way, there is a diversity gain, since the sources have two possible ways of receiving a message from a source of the other type. At the fourth time slot in Fig. \ref{comphases}b, we notice that most source nodes have received the same message twice. This means that these nodes have higher probability to decode this message. However, depending on the random topology, there is a chance that some source nodes will receive two different messages, as it happens in nodes $1_3$ and $2_1$ and, thus, deduce more information about their environment. Moreover, if a relay fails to decode the messages in the first two time slots, then it transmits power to the sources to cooperate only in terms of energy. 

\section{Successful Message Exchange Probability}\label{section_pex}
In this section, we present the probability of successful message exchange between the two types of sources in one CP for the DC and CC scenarios. The successful message exchange probability is an important QoS metric, defined as the probability of both $S_1$ and $S_2$ sources to decode successfully the received messages within a CP. 

\subsection{Direct Communication Scenario}\label{pexdir}
In the first time slot of the DC scenario, all $S_2$ source nodes decode successfully a direct message from their nearest $S_1$ neighbor with a probability denoted as $p_{DC_1}$. Similarly, with $p_{DC_2}$ we denote the probability that all $S_1$ source nodes decode successfully a direct message from their nearest $S_2$ neighbor in the second time slot. These probabilities (i.e., $p_{DC_1}$ and $p_{DC_2}$) are independent and have common network parameters except for the intensity $\lambda_1$ and $\lambda_2$, respectively. Therefore, the probability $p_{DC_{\varphi}}=f(\lambda_{\varphi})$ is a function of the intensity and the probability $p_{DC}$ that all source nodes have successfully decoded a message from the nearest neighbor of the other type is given by
\begin{equation}
p_{DC}=p_{DC_1}p_{DC_2}=\prod_{\varphi=1}^2p_{DC_\varphi}
\end{equation}
To that end, to derive $p_{DC}$ we have to calculate the probability $p_{DC_\varphi}$. Moreover, in order to account for the power splitting process described by \eqref{powersplitfunction}, we have to differentiate between the cases of $h<\psi$ and $h\geq\psi$. Therefore, the probability of successful message exchange for the direct scenario is given by the following theorem.

\begin{theorem}\label{firsttheorem}
The probability of successful message decoding in one time slot for the DC scenario is given by \eqref{theorem1},
\begin{figure*}[!t]
\normalsize
\begin{equation}
\label{theorem1}
\begin{split}
&p_{DC_\varphi}=\pi\lambda_{\varphi}(1-e^{-\mu\psi})\int_0^{\infty}\exp{\bigg[-\pi\lambda_{\varphi} r\bigg(1+{\gamma^*}^{2/\alpha}\int_{{\gamma^*}^{-2/\alpha}}^{\infty}\frac{1}{1+u^{a/2}}\text{d}u\bigg)-\frac{\mu \gamma^*N}{\psi P_t}r^{\alpha/2}\bigg]}\text{d}r+\\
&+2\pi\lambda_{\varphi} e^{-\mu\psi}\int_0^{\infty}\exp{\bigg[-2\pi\lambda_{\varphi}\frac{{}_2F_1\big(1,\frac{\alpha-2}{\alpha};\frac{2\alpha-2}{\alpha};\frac{\gamma^*P_t\psi}{P_t\psi-r^{\alpha}\gamma^* N}\big)\big(\frac{1}{\gamma^*r^{\alpha}}-\frac{N}{P_t\psi}\big)^{-\frac{2}{\alpha}} \big(\frac{\gamma^*P_t\psi}{P_t\psi-r^{\alpha}\gamma^* N}\big)^{\frac{\alpha-2}{\alpha}} }{\alpha-2}-\pi\lambda_{\varphi} r^2\bigg]}r\text{d}r
\end{split}
\end{equation}
\hrulefill
\end{figure*}
where ${}_2F_1(a,b;c;z)=\sum_{n=0}^{\infty}\frac{(a)_n(b)_n}{(c)_n}\frac{z^n}{n!}$ is the hypergeometric function.
\end{theorem}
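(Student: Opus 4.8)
The plan is to compute the success probability seen by a \emph{typical} receiving node and then average over the random geometry; this is precisely the quantity $p_{DC_\varphi}$ that the theorem reports for one time slot. By Slivnyak's theorem I place that receiver (of type $\hat\varphi$) at the origin. Its useful signal comes from the nearest node of $\Phi_{S_\varphi}$, whose distance $D$ has the Rayleigh density $f_D(d)=2\pi\lambda_\varphi d\,e^{-\pi\lambda_\varphi d^2}$, and, conditioned on $D=d$, the aggregate interference $I_d=\sum_{x\in\Phi_{S_\varphi},\,|x|>d}P_t h_x|x|^{-\alpha}$ is generated by a PPP of intensity $\lambda_\varphi$ restricted to $\{|x|>d\}$. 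The first step is the law of total probability over the two regimes of the power-splitting rule \eqref{powersplitfunction}, $p_{DC_\varphi}=\Pr(\gamma\ge\gamma^*,\,h<\psi)+\Pr(\gamma\ge\gamma^*,\,h\ge\psi)$, which is what eventually produces the prefactors $1-e^{-\mu\psi}$ and $e^{-\mu\psi}$ in \eqref{theorem1}.

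In the regime $h<\psi$ we have $v(\psi)=1$, so by \eqref{sinr} the event $\gamma\ge\gamma^*$ is equivalent to $h\ge{\gamma^*}d^{\alpha}(I_d+N)/P_t$; intersecting with $\{h<\psi\}$ and using that $h$ is exponential (conditioned on $d$ and on the interference field) writes this probability as $e^{-\mu{\gamma^*}Nd^{\alpha}/P_t}\,\mathbb{E}\big[e^{-(\mu{\gamma^*}d^{\alpha}/P_t)I_d}\big]$, minus a term that vanishes when the threshold stays below $\psi$ and is otherwise neglected. In the regime $h\ge\psi$ we have $v(\psi)=\psi/h$, and the key algebraic manipulation is to clear $h$ from both the numerator and the noise term of \eqref{sinr}: $\gamma\ge\gamma^*$ is equivalent to $h\big(\psi P_td^{-\alpha}-{\gamma^*}N\big)\ge{\gamma^*}\psi I_d$, which is feasible only for $d^{\alpha}<\psi P_t/({\gamma^*}N)$ and then reads $h\ge{\gamma^*}\psi I_d/\big(\psi P_td^{-\alpha}-{\gamma^*}N\big)$. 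Again the exponential law of $h$ collapses the conditional probability to a Laplace transform $\mathbb{E}[e^{-sI_d}]$, now with $s=\mu{\gamma^*}\psi/\big(\psi P_td^{-\alpha}-{\gamma^*}N\big)$. In both regimes $\mathbb{E}[e^{-sI_d}]$ is evaluated through the probability generating functional of the PPP on $\{|x|>d\}$, giving $\exp\!\big(-2\pi\lambda_\varphi\int_d^{\infty}\tfrac{sP_t\rho}{\mu\rho^{\alpha}+sP_t}\,\mathrm{d}\rho\big)$.

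It then remains to put these integrals in closed form and to remove the conditioning on $d$. For the first regime, where $s=\mu{\gamma^*}d^{\alpha}/P_t$, the substitutions $v=\rho^2$ and then $v=({\gamma^*}d^{\alpha})^{2/\alpha}u$ turn $-2\pi\lambda_\varphi\int_d^{\infty}\tfrac{sP_t\rho}{\mu\rho^{\alpha}+sP_t}\mathrm{d}\rho$ into $-\pi\lambda_\varphi d^{2}{\gamma^*}^{2/\alpha}\int_{{\gamma^*}^{-2/\alpha}}^{\infty}\tfrac{\mathrm{d}u}{1+u^{\alpha/2}}$, the bracketed expression in the first line of \eqref{theorem1}; multiplying by $f_D(d)$, by $1-e^{-\mu\psi}$, and by the noise factor $e^{-\mu{\gamma^*}Nd^{\alpha}/P_t}$, and finally substituting $r=d^{2}$, yields the first ($\mathrm{d}r$) integral. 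For the second regime the analogous substitution reduces the exponent to a multiple of $\int_1^{\infty}\tfrac{c\,x}{x^{\alpha}+c}\,\mathrm{d}x$ with $c=\gamma^*P_t\psi/\big(P_t\psi-r^{\alpha}\gamma^*N\big)$ and $r=d$; expanding $(x^{\alpha}+c)^{-1}$ as a geometric series and integrating term by term, together with the identity ${}_2F_1(1,b;b+1;z)=b\sum_{n\ge0}z^{n}/(b+n)$ for $b=\tfrac{\alpha-2}{\alpha}$, identifies this integral with the $\tfrac{1}{\alpha-2}\,{}_2F_1$ factor in the second line of \eqref{theorem1}; keeping $r=d$ and averaging against $f_D$ produces the second integral.

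The step I expect to be the main obstacle is the $h\ge\psi$ regime. Because DPS injects the fading $h$ into the noise term of \eqref{sinr}, the decoding event is no longer a plain comparison of the SIR with $\gamma^*$, so it must first be rearranged into a one-sided threshold on $h$ before the exponential distribution can be integrated out; and because the noise does not disappear under this rearrangement, the interference integral does not collapse to an elementary function but only to the Gauss hypergeometric ${}_2F_1$, with the side constraint $d^{\alpha}<\psi P_t/({\gamma^*}N)$ delimiting the region where the conditional success probability is nonzero. Keeping track of that feasibility window inside the $d$-integral, and of the two-sided constraint $\{h\ge\text{threshold}\}\cap\{h<\psi\}$ in the first regime, are the places where care is needed; the remainder is routine substitution and the generating-functional computation.
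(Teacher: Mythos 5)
Your proposal follows essentially the same route as the paper's proof: the same split over the two DPS regimes with prefactors $1-e^{-\mu\psi}$ and $e^{-\mu\psi}$, conditioning on the nearest-transmitter distance with its Rayleigh density, using the exponential fading to collapse the success event into a Laplace transform of the interference, and evaluating that transform via the PPP probability generating functional (the paper obtains the ${}_2F_1$ closed form by symbolic computation rather than your series argument, and, like you, it effectively replaces the conditional law of $h$ given $h<\psi$, resp.\ $h\geq\psi$, by the unconditional exponential, a simplification you at least flag explicitly). The only discrepancy worth noting is that your noise exponent $\mu\gamma^* N r^{\alpha/2}/P_t$ in the first term matches the expression derived in the paper's proof body, whereas the displayed \eqref{theorem1} carries an extra $\psi$ in that denominator, which appears to be a typo in the theorem statement rather than an error on your side.
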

\begin{proof}
By taking into account \eqref{powersplitfunction} and \eqref{sinr}, the probability $p_{DC_{\varphi}}$ is given by
\begin{equation}
p_{DC_{\varphi}}=\text{Pr}(\gamma>\gamma^*\cap h<\psi)+\text{Pr}(\gamma>\gamma^*\cap h\geq\psi).
\label{pex_dintersection}
\end{equation}
Conditioning on the value of the RV $h$ using the Kolmogorov definition of conditional probabilities, we obtain 
\begingroup
\footnotesize
\begin{equation}
p_{DC_{\varphi}}=\text{Pr}(h<\psi)\text{Pr}(\gamma>\gamma^*|h<\psi)+\text{Pr}(h\geq\psi)\text{Pr}(\gamma>\gamma^*|h\geq \psi).
\label{pex_d}
\end{equation}\endgroup
Since $h$ is exponentially distributed with rate $\mu$, \eqref{pex_d} can be written as
 \begin{equation}
p_{DC_{\varphi}}\hspace{-0.5ex}=\hspace{-0.5ex}\bigg(\hspace{-0.5ex}1-\frac{1}{e^{\mu\psi}}\hspace{-0.5ex}\bigg)\text{Pr}(\gamma>\gamma^*|h<\psi)+\frac{1}{e^{\mu\psi}}\text{Pr}(\gamma>\gamma^*|h\geq \psi).
\label{pex_d2}
\end{equation}
In \eqref{pex_d2}, the probability $\text{Pr}(\gamma>\gamma^*|h<\psi)$ can be easily calculated using guidelines from \cite{coverage} and it is given as
\begingroup
\footnotesize
\begin{align}
&\text{Pr}(\gamma>\gamma^*|h<\psi)=\\
&=\pi\lambda_{\varphi}\int_0^{\infty}\exp{\bigg[-\pi \lambda_{\varphi}  r\bigg(1+\int_{{\gamma^*}^{\frac{-2}{\alpha}}}^{\infty}\frac{{\gamma^*}^{2/\alpha}}{1+u^{a/2}}\text{d}u\bigg)-\frac{\mu \gamma^*N}{P_t r^{-\alpha/2}}\bigg]}\text{d}r,\nonumber
\end{align}
\endgroup
whereas the proof for the probability $\text{Pr}(\gamma>\gamma^*|h\geq \psi)$ is provided in Appendix A. Replacing $\text{Pr}(\gamma>\gamma^*|h<\psi)$ and $\text{Pr}(\gamma>\gamma^*|h\geq \psi)$ in \eqref{pex_d2}, concludes the proof.
\end{proof}
\begin{lemma}\label{lemmaa4}
For the special but common case when the path loss exponent is $\alpha=4$, Theorem \ref{firsttheorem} is simplified into 
\begingroup
\footnotesize
\begin{align}
&p_{DC_{\varphi}}\hspace{-0.5ex}=\pi\lambda_{\varphi}(1-e^{-\mu\psi})\sqrt{\frac{\pi}{\omega(\gamma^*)}}\exp{\bigg(\frac{\chi(\lambda_{\varphi},\gamma^*)^2}{4\omega(\gamma^*)}\bigg)}Q\bigg(\frac{\chi(\lambda_{\varphi},\gamma^*)}{\sqrt{2\omega(\gamma^*)}}\bigg)+\nonumber\\
&+2\pi\lambda_{\varphi} e^{-\mu\psi}\int_0^{\infty}\exp{\bigg[-\pi\lambda_{\varphi} r^2\Big(1+\zeta(r,\gamma^*)\arctan\big[\zeta(r,\gamma^*)\big] \Big)\bigg]}r\text{d}r,
\label{theorem1le}
\end{align}
\endgroup
where $Q(x)=\frac{1}{\sqrt{2\pi}}\int_x^\infty\exp{(-q^2/2)}\text{d}q$ is the tail probability of the standard normal distribution, $\chi(\lambda_{\varphi},\gamma^*)=\pi\lambda_{\varphi}(1+\sqrt{\gamma^*}(\pi/2-\arccot{(\sqrt{\gamma^*})}))$, $\zeta(r,\gamma^*)=\sqrt{\frac{P_t\gamma^*}{P_t-r^4\gamma^*N}}$ and $\omega(\gamma^*)=\mu\gamma^*N/P_t$.
\end{lemma}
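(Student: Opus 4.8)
The plan is to substitute $\alpha=4$ into the two integrals of Theorem~\ref{firsttheorem} and to observe that at this value both the inner interference integral and the hypergeometric factor collapse to elementary functions: the $h<\psi$ piece becomes a Gaussian-type integral with a closed form, while the $h\ge\psi$ piece becomes a single integral whose integrand is elementary but whose antiderivative is not.

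For the $h<\psi$ contribution, setting $\alpha=4$ turns the inner integral into $\int_{{\gamma^*}^{-1/2}}^{\infty}\frac{\mathrm du}{1+u^{2}}=\frac{\pi}{2}-\arccot\sqrt{\gamma^*}$, so the coefficient of $\pi\lambda_\varphi r$ in the exponent becomes $1+\sqrt{\gamma^*}\big(\tfrac{\pi}{2}-\arccot\sqrt{\gamma^*}\big)$; that is, the whole linear-in-$r$ term equals $\chi(\lambda_\varphi,\gamma^*)\,r$, while the $r^{\alpha/2}$ term becomes the quadratic $\omega(\gamma^*)\,r^{2}$. The exponent is then a quadratic polynomial in $r$, and I would evaluate $\int_0^{\infty}e^{-\omega r^{2}-\chi r}\,\mathrm dr$ by completing the square and substituting, obtaining $\tfrac12\sqrt{\pi/\omega}\,e^{\chi^{2}/(4\omega)}\operatorname{erfc}\!\big(\chi/(2\sqrt\omega)\big)$, and finally rewrite $\operatorname{erfc}(x)=2Q(x\sqrt2)$ to land on the first line of \eqref{theorem1le}.

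For the $h\ge\psi$ contribution, at $\alpha=4$ the hypergeometric parameters reduce to ${}_2F_1\big(1,\tfrac12;\tfrac32;z\big)$, which has the elementary closed form $\arctan(\sqrt{-z})/\sqrt{-z}$ (equivalently $\operatorname{arctanh}(\sqrt z)/\sqrt z$). Simultaneously, the prefactors $\big(\tfrac{1}{\gamma^* r^{4}}-\tfrac{N}{P_t\psi}\big)^{-1/2}$ and $\big(\tfrac{\gamma^*P_t\psi}{P_t\psi-r^{4}\gamma^*N}\big)^{1/2}$, after clearing the common denominator, combine into $r^{2}$ times a power of the shared argument, which is $\zeta(r,\gamma^*)^{2}$ up to sign. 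Collecting all of this, the large fraction in the exponent of \eqref{theorem1} telescopes to $\pi\lambda_\varphi r^{2}\,\zeta(r,\gamma^*)\arctan[\zeta(r,\gamma^*)]$, which together with the standalone $-\pi\lambda_\varphi r^{2}$ yields the exponent $-\pi\lambda_\varphi r^{2}\big(1+\zeta\arctan\zeta\big)$; since this $r$-integral has no elementary antiderivative it is left in integral form, and substituting both simplified pieces back into \eqref{theorem1} completes the proof. (Alternatively one could re-derive the $h\ge\psi$ term directly for $\alpha=4$, since the Laplace transform of the PPP interference is itself elementary at that exponent.)

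The main obstacle is the bookkeeping in the second term: one must select the correct branch of ${}_2F_1(1,\tfrac12;\tfrac32;\cdot)$ — inverse tangent versus inverse hyperbolic tangent — according to the sign of its argument $\tfrac{\gamma^*P_t\psi}{P_t\psi-r^{4}\gamma^*N}$, i.e.\ according to whether $r^{4}\gamma^*N$ lies below or above $P_t\psi$, and then verify that the product of that function with the two fractional-power prefactors cancels all residual $r$-dependence except the advertised $r^{2}\zeta\arctan\zeta$. By comparison, the $h<\psi$ part is a routine completion-of-the-square once $\alpha=4$ is inserted.
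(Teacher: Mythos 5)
Your proposal is correct and follows essentially the same route as the paper: the $h<\psi$ term is the standard $\alpha=4$ completion-of-the-square/Gaussian-tail evaluation (which the paper simply imports from the coverage analysis of Andrews \emph{et al.}), and the $h\ge\psi$ term is handled by recognizing ${}_2F_1\big(\tfrac12,1;\tfrac32;\cdot\big)$ as the $\arctan$ (the paper inserts a trivial Euler transformation before invoking the same identity), with the prefactors collapsing to $r^2\zeta\arctan\zeta$ exactly as you describe. Your attention to the sign of the hypergeometric argument (arctan versus arctanh) is well placed, since that is precisely where the displayed form of \eqref{theorem1} is loose about signs and about the placement of $\psi$, and the direct Laplace-transform rederivation you mention as an alternative confirms the $\arctan$ branch.
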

\begin{proof}
The proof of Lemma \ref{lemmaa4} is provided in Appendix B.
\end{proof}

\subsection{Cooperative Communication Scenario}

In the case of the cooperative scenario, the two sets of sources exchange their messages either directly or with the assistance of relay nodes. Therefore, the overall probability of successful exchange in the cooperative case, denoted as $p_{CC}$, depends both on the probabilities $p_{DC_1}$ and $p_{DC_2}$ derived in Section~\ref{pexdir} and on the probability $p_{CC_{R_{\varphi}}}$, which denotes the probability that relay has decoded a message from its nearest type $\varphi$ source and a type $\hat{\varphi}$ source node has successfully decoded this message through this relay. Hence, there are three events for successful exchange in the cooperative scenario: i) both directly and through a relay, ii) only directly, or iii) only through a relay. Since these events are mutually exclusive, the probability of successful exchange in the cooperative case is given by the following lemma\footnote{It should be noted that, although the interference at the relay and destination in the two first time slots comes from the same set of nodes, the impact of fading minimizes the correlation and, therefore, the events can be considered independent.}.

\begin{lemma}\label{prop12}
The probability of successful message exchange in one CP for the cooperative scenario is given by
\begin{equation}
p_{CC}=\big(p_{DC_1}+p_{CC_{R_1}}(1-p_{DC_1})\big)\big(p_{DC_2}+p_{CC_{R_2}}(1-p_{DC_2})\big).
\label{psdprop}
\end{equation}
\end{lemma}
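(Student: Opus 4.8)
The plan is to decompose the event ``both $S_1$ and $S_2$ sources successfully receive a message from their nearest node of the other type within one CP'' into the contributions of the two halves of the CP and then combine them. First I would observe that, by the symmetry of the CC scenario, the CP splits into two statistically analogous sub-problems: one in which every $S_{\hat\varphi}$ node must obtain the message of its nearest $S_\varphi$ node (slot carrying the direct $S_\varphi\to S_{\hat\varphi}$ transmission, plus the later slot in which the relays forward the $S_\varphi$ messages), for $\varphi=1,2$. Because the relay PPP $\Phi_R$, the fading coefficients, and the noise are independent across these two sub-problems, the overall success probability factorizes as the product of the two per-direction success probabilities, which is exactly the outer product structure in~\eqref{psdprop}. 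So the task reduces to establishing that, for a fixed $\varphi$, the probability that all $S_{\hat\varphi}$ nodes receive the type-$\varphi$ message (directly or via a relay) equals $p_{DC_\varphi}+p_{CC_{R_\varphi}}(1-p_{DC_\varphi})$.

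For that per-direction factor I would argue as follows. Fix the direction $\varphi$. A given $S_{\hat\varphi}$ node obtains the message of its nearest $S_\varphi$ node if either (i) it decodes the direct transmission in the first/second slot, an event of probability $p_{DC_\varphi}$ in the aggregate sense defined in Section~\ref{pexdir}, or (ii) it fails on the direct link but its serving relay has decoded that same message (from its own nearest $S_\varphi$ source) and then successfully relays it in the third/fourth slot, which by definition is the event of probability $p_{CC_{R_\varphi}}$. These two ways of succeeding are mutually exclusive as described in the text (``only directly'', ``only through a relay'', and ``both'' are disjoint), and conditioned on the direct link failing, the relay path still has probability $p_{CC_{R_\varphi}}$ by the independence noted in the footnote (fading decorrelates the common-interference slots). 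Hence the per-direction success probability is $p_{DC_\varphi}+(1-p_{DC_\varphi})\,p_{CC_{R_\varphi}}$, equivalently $p_{DC_\varphi}+p_{CC_{R_\varphi}}-p_{DC_\varphi}p_{CC_{R_\varphi}}$, which one can also read off directly from the union of the ``both'' and ``only direct'' and ``only relay'' events. Multiplying the $\varphi=1$ and $\varphi=2$ factors yields~\eqref{psdprop}.

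I expect the main obstacle to be the independence justifications rather than any computation: strictly speaking the event that the relay decodes the $S_\varphi$ message and the event that the $S_{\hat\varphi}$ node decodes it directly are driven partly by the same interfering point process in the first two slots, so the factorization $p_{DC_\varphi}+(1-p_{DC_\varphi})p_{CC_{R_\varphi}}$ is exact only under the decorrelation-by-fading approximation stated in the footnote. The honest thing is to invoke that footnote explicitly and treat the relay path and the direct path as independent given the topology, and similarly to treat the two directions $\varphi=1,2$ as independent because they use disjoint time slots and independent fading draws (the only shared randomness, the node locations, enters each $p_{DC_\varphi}$ and $p_{CC_{R_\varphi}}$ only through its own intensity $\lambda_\varphi$ or $\lambda_R$). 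Once those independences are granted, the rest is the elementary inclusion–exclusion / complementary-probability identity $\Pr(A\cup B)=\Pr(A)+\Pr(B)-\Pr(A)\Pr(B)$ applied per direction and then a product over the two directions, which completes the proof.
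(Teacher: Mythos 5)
Your argument is correct as far as the displayed identity goes, and its structural core---splitting the CP into the two directions and, within each direction, taking the union of the mutually exclusive events (both, only direct, only relay) under the decorrelation-by-fading independence assumption of the footnote---is exactly the reasoning the paper gives in the text immediately preceding Lemma~\ref{prop12}. Where you diverge from the paper's Appendix~C proof is in scope: you treat $p_{CC_{R_\varphi}}$ as a primitive, whereas the paper's proof is almost entirely devoted to making that quantity explicit, namely showing that the relay path itself factorizes into two independent direct-transmission events, $p_{CC_{R_\varphi}}=p_{S_{\varphi}\rightarrow R}\cdot p_{R\rightarrow S_{\hat{\varphi}}}=p_{DC_\varphi}\cdot p_{DC_R}$, where $p_{DC_R}$ follows from Theorem~\ref{firsttheorem} with $\lambda_R$ in place of $\lambda_\varphi$, since each hop of the CC scenario obeys the same nearest-transmitter SINR model as in Section~\ref{pexdir}. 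Your route buys a clean separation of concerns: the outer formula \eqref{psdprop} holds for whatever value $p_{CC_{R_\varphi}}$ takes, so the independence caveats (shared interfering point process across the direct and relay receptions, shared locations across the two directions) are confined to one inclusion--exclusion step, and your justification there is no more approximate than the paper's own. The paper's version buys computability---every factor in \eqref{psdprop} is then evaluated in closed form via Theorem~\ref{firsttheorem}---at the cost of invoking the fading-decorrelation approximation once more for the two relay hops. If your proof is meant to replace Appendix~C, you should add that factorization of $p_{CC_{R_\varphi}}$; as a proof of the stated equation alone, it is sound.
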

\begin{proof}
The proof of Lemma \ref{prop12} is provided in Appendix C.
\end{proof}
\begin{remark}\label{remdccc}
In interference-limited systems, thermal noise is not an important consideration that results in a weak dependence of the probability of successful transmission $p_{DC_{\varphi}}$ with the node intensity \cite{coverage}. To that end, it follows that $p_{DC_1}\simeq p_{DC_2}$ and, thus, $p_{CC}\simeq\big(p_{DC_{\varphi}}+p_{DC_{\varphi}}^2-p_{DC_{\varphi}}^3\big)$. From the latter, it can be easily proven that $p_{CC}\geq p_{DC}$ holds always. Still, although it is always more probable to achieve a successful message exchange in the CC scenario, this result does not imply higher performance of the CC scenario in the end-to-end performance of the network. Consequently, in the following, we perform an analysis on the network lifetime and other end-to-end performance metrics (e.g., spatial throughput) to identify trade-offs between the two scenarios.
\end{remark}
\section{Network Lifetime}\label{section_lifetime}

One of the most important metrics for a WSN is its operating lifetime. In this section, the analysis for the derivation of the network lifetime and the average harvested power is given for all scenarios. In this way, it becomes possible to determine the gains of WEH using DPS.

\subsection{Direct Communication Scenario}\label{dirscen}
After $w_d\in\mathbb{N}_0$ communication periods and without taking EH into account, the average battery level of a source node in the DC scenario is defined by the amount of energy $E_{con}$ consumed per CP and it is given by
\begin{equation}
\bar{B_d}(w_d)=B_I-w_dE_{con}=B_I-w_dt_s(P_r+P_t),
\label{levd}\end{equation}
where $B_I$ is the initial energy level, $t_s$ is the duration of a time slot, $P_r$ is the power consumption at the reception mode, and $P_t$ is the power consumption at the transmission mode. 
In the case that the source nodes have EH capabilities, their battery level is increased in each CP by the average harvested power per CP denoted as $\bar{P_d}^{EH}$. Thus, 
\begin{equation}
\bar{B_d}^{EH}(w_d^{EH})=B_I-w_d^{EH}t_s(P_r+P_t)+w_d^{EH}t_s\bar{P_d}^{EH}.
\label{levehd}
\end{equation}
The roots of  \eqref{levd} and \eqref{levehd} (i.e., the values of $w_d$ and $w_d^{EH}$ that the battery is discharged) provide the source node's average lifetime in CPs $\bar{L_d}$ and $\bar{L_d}^{EH}$, respectively
\begin{equation}
\bar{L_d}=\frac{B_I}{t_s(P_r+P_t)}
\label{wmax}
\end{equation}and
\begin{equation}
\bar{L_d}^{EH}=\frac{B_I}{[t_s(P_r+P_t-\bar{P_d}^{EH})]_+},
\label{wmaxeh}
\end{equation}
where $[\xi]_+=\text{max}(\xi,0)$. 
\begin{remark}
In the extreme case that the denominator of \eqref{wmaxeh} is equal to zero, the consumed power is lower or equal than the average harvested power and, hence, the network lifetime becomes infinite (i.e., the perpetual network operation).
\end{remark}
In the following theorem, the average harvested power $\bar{P_d}^{EH}$ of a source node is provided, in order to complete the derivation of the average network lifetime with EH in the DC scenario $\bar{L_d}^{EH}$, given in \eqref{wmaxeh}. 

\begin{theorem}\label{ahptheorem}
The average harvested power in one CP of a type $S_\varphi$ source node at the DC scenario, while taking into account DPS and before the RF-to-DC conversion efficiency is described by
\begingroup
\footnotesize
\begin{equation}
\bar{P}_{DPS_{d\varphi}}=P_te^{-\mu\psi}\Bigg[\frac{\pi\alpha\lambda_{\hat{\varphi}}}{\mu(\alpha-2)}+\psi e^{-\mu\psi}\text{Ei}[-\mu\psi]\bigg[\frac{\pi\alpha\lambda_{\hat{\varphi}}}{\alpha-2}-\mathbb{E}\big\{r_{c\hat{\varphi}}^{-\alpha}\big\} \bigg]\Bigg],\nonumber\label{avpowharvbef}
\end{equation}\endgroup
whereas the actual average harvested power after applying the RF-to-DC conversion efficiency is given by
\begin{equation}
\bar{P_{d\varphi}}^{EH}=\bar{P}_{DPS_{d\varphi}}\Big[a_3\big(\bar{P}_{\log}\big)^3+a_2\big(\bar{P}_{\log}\big)^2+a_1\big(\bar{P}_{\log}\big)+a_0\Big],\label{avpowharv}
\end{equation} 
where $\bar{P}_{\log}=10\log_{10}{\frac{\bar{P}_{DPS_{d\varphi}}}{1\text{mW}}}$, $\text{Ei}[x]=-\int_{-x}^{\infty}\frac{e^{-t}}{t}\text{d}t$ for nonzero values of $x$ denotes the exponential integral and $\mathbb{E}\big\{r_{c\hat{\varphi}}^{-\alpha}\big\}$ denotes the expected value of the path loss to the nearest type $S_{\hat{\varphi}}$ transmitter for different path loss exponent values $\alpha>2$, given within the proof.
\end{theorem}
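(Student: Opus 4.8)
The plan is to treat the energy a tagged type-$S_\varphi$ source collects during the CP slot in which it acts as a receiver as a random variable determined by (i) the distance $r_{c\hat{\varphi}}$ to its nearest type-$S_{\hat{\varphi}}$ transmitter, (ii) the fading $h$ on that link, and (iii) the residual interference $I_{r_{c\hat{\varphi}}}$ from the remaining $S_{\hat{\varphi}}$ transmitters (only the complementary set transmits in that slot). Since the power splitter of Fig.~\ref{ehs} diverts the fraction $1-v(\psi)$ of the \emph{total} incident power $P_I=P_t h\,r_{c\hat{\varphi}}^{-\alpha}+I_{r_{c\hat{\varphi}}}$ to the rectenna, the harvested power before conversion is $P_H=(1-v(\psi))\,P_I$; by \eqref{powersplitfunction}, $1-v(\psi)=\mathds{1}[h\geq\psi]\,(1-\psi/h)$, so $P_H=\mathds{1}[h\geq\psi]\big(P_t r_{c\hat{\varphi}}^{-\alpha}(h-\psi)+(1-\psi/h)\,I_{r_{c\hat{\varphi}}}\big)$. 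Taking expectations and using that the tagged-link fading $h$ is i.i.d.\ and hence independent both of $r_{c\hat{\varphi}}$ and of the interfering field, $\bar{P}_{DPS_{d\varphi}}=\mathbb{E}[P_H]$ decomposes into a signal term $P_t\,\mathbb{E}\{r_{c\hat{\varphi}}^{-\alpha}\}\,\mathbb{E}[\mathds{1}[h\geq\psi](h-\psi)]$ and an interference term $\mathbb{E}[\mathds{1}[h\geq\psi](1-\psi/h)]\cdot\mathbb{E}[I_{r_{c\hat{\varphi}}}]$.

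Next I would evaluate the three expectations. The two fading moments are elementary integrals against the exponential density of rate $\mu$: by the memoryless property $\mathbb{E}[\mathds{1}[h\geq\psi](h-\psi)]=e^{-\mu\psi}/\mu$, while $\mathbb{E}[\mathds{1}[h\geq\psi](1-\psi/h)]=e^{-\mu\psi}-\mu\psi\int_\psi^\infty h^{-1}e^{-\mu h}\,\mathrm{d}h=e^{-\mu\psi}+\mu\psi\,\text{Ei}(-\mu\psi)$, which is precisely where the exponential integral of the statement appears (substitute $t=\mu h$ and use the definition of $\text{Ei}$ given in the theorem). For $\mathbb{E}[I_{r_{c\hat{\varphi}}}]$, conditioning on $r_{c\hat{\varphi}}=r$ turns the interferers into a PPP of intensity $\lambda_{\hat{\varphi}}$ on $\mathbb{R}^2\setminus B(0,r)$ by Slivnyak's theorem, so Campbell's theorem gives $\mathbb{E}[I_{r_{c\hat{\varphi}}}\mid r]=\lambda_{\hat{\varphi}}P_t\,\mathbb{E}[h]\int_r^\infty 2\pi\rho^{1-\alpha}\,\mathrm{d}\rho=\frac{2\pi\lambda_{\hat{\varphi}}P_t}{\mu(\alpha-2)}\,r^{2-\alpha}$, and deconditioning over the nearest-neighbour distance density $2\pi\lambda_{\hat{\varphi}}r\,e^{-\pi\lambda_{\hat{\varphi}}r^2}$ gives a closed form in $\lambda_{\hat{\varphi}}$ and $\alpha$ after the substitution $t=\pi\lambda_{\hat{\varphi}}r^2$.

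The remaining and most delicate ingredient is $\mathbb{E}\{r_{c\hat{\varphi}}^{-\alpha}\}$: written against the same nearest-neighbour density it equals $\int_0^\infty r^{-\alpha}\,2\pi\lambda_{\hat{\varphi}}r\,e^{-\pi\lambda_{\hat{\varphi}}r^2}\,\mathrm{d}r$, whose integrand behaves like $r^{1-\alpha}$ near the origin and is therefore \emph{not integrable for any} $\alpha>2$ under the idealized $d^{-\alpha}$ law. This is exactly the point at which the derivation must step away from the bare path-loss model --- for instance through a bounded path-loss or minimum-separation regularization, or by giving $\mathbb{E}\{r_{c\hat{\varphi}}^{-\alpha}\}$ case by case in $\alpha$ (which lands on incomplete-Gamma type expressions) --- as the statement itself anticipates by deferring it ``within the proof''. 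I expect this step, together with justifying the chosen regularization, to be the main obstacle. Once the path-loss moment is fixed, substituting the three moments into the signal-plus-interference decomposition above and collecting the common $e^{-\mu\psi}$ factor yields the stated closed form for $\bar{P}_{DPS_{d\varphi}}$.

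Finally, for the actual harvested power I would pass the mean input power through the measured RF-to-DC characteristic \eqref{conveff}: evaluating the cubic efficiency at the average input level expressed in dBm, $\bar{P}_{\log}=10\log_{10}(\bar{P}_{DPS_{d\varphi}}/1\,\text{mW})$, and scaling by it gives $\bar{P_{d\varphi}}^{EH}=\bar{P}_{DPS_{d\varphi}}\,\epsilon(\bar{P}_{\log})$, which is \eqref{avpowharv}. I would make explicit that this last step is a modeling approximation --- the efficiency is evaluated at the mean power rather than averaged against the distribution of the input power --- so that its quality is something to be corroborated by the Monte-Carlo validation in Section~\ref{section_results} rather than an exact identity.
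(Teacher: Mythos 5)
Your route is sound and, in substance, equivalent to the paper's, though organized differently and with one concrete execution gap. The paper never splits the received power into signal plus interference: it keeps the aggregate sum $\sum_{i\in\Phi}P_t h_i r_i^{-\alpha}$, conditions the whole expectation on $h_c>\psi$, and uses an add-and-subtract trick on the nearest-transmitter term so that Campbell's theorem applies to the unconditioned sum; the splitter then enters only through the conditional moments $\mathbb{E}\{h_c\mid h_c>\psi\}=(1+\mu\psi)/\mu$ and $\mathbb{E}\{1/h_c\mid h_c>\psi\}=-\mu e^{\mu\psi}\text{Ei}[-\mu\psi]$. Your decomposition, weighting the nearest term by $\mathbb{E}[\mathds{1}(h\geq\psi)(h-\psi)]=e^{-\mu\psi}/\mu$ and the interference by $\mathbb{E}[\mathds{1}(h\geq\psi)(1-\psi/h)]=e^{-\mu\psi}+\mu\psi\,\text{Ei}[-\mu\psi]$, is the same algebra with the conditioning undone, and if you identify $\mathbb{E}[I_{r_{c\hat{\varphi}}}]$ with ``mean total minus mean nearest'', i.e. $\frac{P_t}{\mu}\big(\frac{\pi\alpha\lambda_{\hat{\varphi}}}{\alpha-2}-\mathbb{E}\{r_{c\hat{\varphi}}^{-\alpha}\}\big)$, your expression collapses term by term to the paper's appendix result. (Carried through, your computation lands on $\psi e^{+\mu\psi}\text{Ei}[-\mu\psi]$ inside the bracket, exactly as the paper's Appendix D algebra does, which suggests the $e^{-\mu\psi}$ in the theorem display is a typographical slip rather than something you must reproduce.)

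The gap is in how you evaluate the interference moment. Conditioning on $r_{c\hat{\varphi}}=r$ and integrating the bare power law gives $\mathbb{E}[I_{r_{c\hat{\varphi}}}]\propto\mathbb{E}\{r_{c\hat{\varphi}}^{2-\alpha}\}=(\pi\lambda_{\hat{\varphi}})^{(\alpha-2)/2}\Gamma\big(2-\tfrac{\alpha}{2}\big)$, which diverges for every $\alpha\geq4$ --- including the paper's default $\alpha=4$ --- so the near-field problem you correctly flag for $\mathbb{E}\{r_{c\hat{\varphi}}^{-\alpha}\}$ hits your interference term as well, and even where it converges it does not yield the $\frac{\pi\alpha\lambda_{\hat{\varphi}}}{\alpha-2}-\mathbb{E}\{r_{c\hat{\varphi}}^{-\alpha}\}$ structure of the statement. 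The paper's (tacit) resolution is a single bounded path-loss law $\min(1,r^{-\alpha})$ used everywhere: its constant $\frac{\pi\alpha\lambda_{\hat{\varphi}}}{\mu(\alpha-2)}$ is precisely Campbell's theorem over the whole plane under that law ($\pi+\frac{2\pi}{\alpha-2}$), and its nearest-neighbour moment in \eqref{rcalphaexp}, $1-e^{-\lambda\pi}$ plus the tail integral, is the same truncation --- this is the ``given within the proof'' step you anticipated. So the missing move is not a new regularization argument but consistency: adopt the truncated law for both the nearest term and the interferers, and take the interference mean as the Campbell total minus the nearest-neighbour moment instead of a separate exclusion-ball integral; with that, your signal-plus-interference form reproduces the claimed expression. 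Your final step --- scaling $\bar{P}_{DPS_{d\varphi}}$ by the cubic efficiency \eqref{conveff} evaluated at the mean power in dBm --- is exactly the paper's, and your caveat that this evaluates $\epsilon(\cdot)$ at the mean rather than averaging it over the input-power distribution is a fair reading of what \eqref{avpowharv} actually asserts.
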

\begin{proof}The proof of Theorem \ref{ahptheorem} is provided in Appendix D.
\end{proof}
\begin{remark}\label{minrem}At this point, it should be mentioned that the average lifetime $\bar{L_d}^{EH}$ is limited by the set of sources with the least average harvested power. Thus, it holds that\begin{equation}\bar{L_d}^{EH}={B_I\big/[t_s(P_r+P_t-\min\{\bar{P_{d1}}^{EH}, \bar{P_{d2}}^{EH}\})]_+}.\end{equation}
This happens because when a set of sources consumes all of its energy, then we assume that the system has reached its lifetime.
\end{remark}

\subsection{Cooperative Communication Scenario}
In the cooperative communication scenario, a set of relay nodes assists the source nodes to exchange their messages. Therefore, without taking EH into account, the battery level of a node after $w_c\in\mathbb{N}_0$ CPs in the cooperative scenario is defined by the initial battery level and the amount of energy $E_{con}$ consumed per CP and it is given by
\begin{equation}
\bar{B_c}(w_c)=B_I-w_cE_{con}=B_I-w_ct_s\Big(2P_r+P_t\big(1+\mathds{1}_{R}\big)\Big),
\label{levcomb}\end{equation}
where $\mathds{1}_{R}$ is the indicator function that determines whether \eqref{levcomb} represents the battery level of a relay node or a source and it is described by
  \begin{numcases}{\mathds{1}_{R}=}\label{bbbb}
    1, &  \text{Relay node.}\\
    0, &  \text{Source node.}\nonumber
  \end{numcases}
Similarly, in the case that the nodes have EH capabilities, their battery level at any CP $w_c^{EH}$ is
\begin{equation}
\bar{B_c}^{EH}(w_c^{EH})=B_I-w_c^{EH}t_s\Big(2P_r+P_t\big(1+\mathds{1}_{R}\big)\Big)+w_c^{EH}t_s\bar{P_c}^{EH},\label{levehcomb}
\end{equation}
where $\bar{P_c}^{EH}$ is the average harvested power in one CP.
The roots of \eqref{levcomb} and \eqref{levehcomb} provide the node's average lifetime in CPs for each case, respectively
\begin{equation}
\bar{L_c}=\frac{B_I}{t_s\Big(2P_r+P_t\big(1+\mathds{1}_{R}\big)\Big)}
\label{wmaxcomb}
\end{equation}and
\begin{equation}
\bar{L_c}^{EH}=\frac{B_I}{[t_s\Big(2P_r+P_t\big(1+\mathds{1}_{R}\big)\Big)-t_s\bar{P_c}^{EH}]_+}.
\label{wmaxehcomb}
\end{equation}
As in the DC scenario, the average harvested power $\bar{P_c}^{EH}$ must be derived to complete the calculation of the network lifetime with EH in the CC scenario $\bar{L_c}^{EH}$, given in \eqref{wmaxehcomb}.

\begin{lemma}\label{ahplemma}
The average harvested power of a type $S_{\varphi}$ source $\bar{P}_{DPS_{c\varphi}}$ or a relay node $\bar{P}_{DPS_{cR}}$ for the cooperative scenario, while taking into account DPS and before the RF-to-DC conversion efficiency, is the sum of the average power harvested by the transmissions of the other two sets. Hence, we obtain \eqref{avpowharvbefc} and \eqref{avpowharvbefc2},
\begin{figure*}[!t]
\normalsize
\begin{equation}
\label{avpowharvbefc}
\begin{split}
\bar{P}_{DPS_{c\varphi}}=P_te^{-\mu\psi}\Bigg[\frac{\pi\alpha(\lambda_R+\lambda_{\hat{\varphi}})}{\mu(\alpha-2)}+\psi e^{-\mu\psi}\text{Ei}[-\mu\psi]\bigg[\frac{\pi\alpha(\lambda_R+\lambda_{\hat{\varphi}})}{\alpha-2}-\mathbb{E}\big\{r_{c\hat{\varphi}}^{-\alpha}\big\}-\mathbb{E}\big\{r_{cR}^{-\alpha}\big\} \bigg]\Bigg]
\end{split}
\end{equation}
\end{figure*}
\begin{figure*}[!t]
\normalsize
\begin{equation}
\label{avpowharvbefc2}
\begin{split}
\bar{P}_{DPS_{cR}}=P_te^{-\mu\psi}\Bigg[\frac{\pi\alpha(\lambda_\varphi+\lambda_{\hat{\varphi}})}{\mu(\alpha-2)}+\psi e^{-\mu\psi}\text{Ei}[-\mu\psi]\bigg[\frac{\pi\alpha(\lambda_\varphi+\lambda_{\hat{\varphi}})}{\alpha-2}-\mathbb{E}\big\{r_{c\hat{\varphi}}^{-\alpha}\big\}-\mathbb{E}\big\{r_{c\varphi}^{-\alpha}\big\} \bigg]\Bigg]
\end{split}
\end{equation}
\hrulefill
\end{figure*}
where $\text{Ei}[x]$ is the exponential integral of $x$ and $\mathbb{E}\big\{r_{cR}^{-\alpha}\big\}$ denotes the expected value of the path loss to the nearest relay. 

The actual average harvested power after applying the RF-to-DC conversion efficiency $\bar{P_{c\iota}}^{EH}$, $\iota\in\{\varphi,R\}$ is given by
\begin{equation}
\bar{P_{c\iota}}^{EH}=\bar{P}_{DPS_{c\iota}}\Big[a_3\big(\bar{P}_{c\log}\big)^3+a_2\big(\bar{P}_{c\log}\big)^2+a_1\big(\bar{P}_{c\log}\big)+a_0\Big],\label{avpowharvc}
\end{equation} 
where $\bar{P}_{c\log}=10\log_{10}{\frac{\bar{P}_{DPS_{c\iota}}}{1\text{mW}}}$.
\end{lemma}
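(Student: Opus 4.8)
The plan is to derive Lemma~\ref{ahplemma} from Theorem~\ref{ahptheorem} by exploiting the additivity of the average harvested power over the slots of a CP together with the mutual independence of the three PPPs $\Phi_{S_1},\Phi_{S_2},\Phi_R$. First I would identify, for each node type, which transmitting sets it overhears during one CP. A type-$S_\varphi$ source transmits in one slot, listens to the broadcast of $S_{\hat\varphi}$ in another, and is exposed to the relay-forwarding phase, in which every active relay radiates at power $P_t$ (either retransmitting a decoded packet or, having failed to decode, emitting a pure energy signal); hence it harvests from $\Phi_{S_{\hat\varphi}}$ and from $\Phi_R$. A relay, on the other hand, listens during the two source-broadcast slots and therefore harvests from $\Phi_{S_1}$ and $\Phi_{S_2}$. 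By the linearity of expectation, $\bar{P}_{DPS_{c\varphi}}$ and $\bar{P}_{DPS_{cR}}$ are thus sums of per-PPP harvested-power terms, one for each set overheard.

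Next I would observe that the contribution of a single overheard PPP is structurally the quantity already computed in the proof of Theorem~\ref{ahptheorem} (Appendix~D). Conditioned on the transmitting set, the DPS rule \eqref{powersplitfunction} diverts the fraction $(1-\psi/h)\mathds{1}\{h\geq\psi\}$ of the power received \emph{from that set} — nearest-transmitter signal plus aggregate interference from the same set — to the rectenna, and the resulting expectation depends on the set only through its intensity $\lambda$ and the nearest-neighbour distance law of a homogeneous PPP. Re-running that computation with $\lambda_{\hat\varphi}$ replaced successively by the relevant intensities produces, for the $S_\varphi$ source, the $\lambda_{\hat\varphi}$-term plus an identical $\lambda_R$-term, and for the relay, a $\lambda_\varphi$-term plus a $\lambda_{\hat\varphi}$-term. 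Collecting the $\tfrac{\pi\alpha\lambda}{\mu(\alpha-2)}$ and $\tfrac{\pi\alpha\lambda}{\alpha-2}$ pieces (which combine linearly into $\lambda_R+\lambda_{\hat\varphi}$, resp.\ $\lambda_\varphi+\lambda_{\hat\varphi}$) and summing the $\mathbb{E}\{r_{c\bullet}^{-\alpha}\}$ correction terms that remove the double-counting of each nearest transmitter's signal component yields exactly \eqref{avpowharvbefc} and \eqref{avpowharvbefc2}. The passage to the actual harvested power then follows \eqref{avpowharv} verbatim: since the mean power at the rectenna is $\bar{P}_{DPS_{c\iota}}$, evaluating the RF-to-DC polynomial \eqref{conveff} at $\bar{P}_{c\log}=10\log_{10}(\bar{P}_{DPS_{c\iota}}/1\text{mW})$ and multiplying gives \eqref{avpowharvc}.

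The main obstacle is not the algebra, which is a mechanical reassembly of the Appendix~D integral, but the modelling justifications that make the slot-by-slot decomposition legitimate: (i) that fading-induced decorrelation lets us treat the harvesting events in successive slots as independent even though the interfering node sets partially overlap — the same approximation already invoked for $p_{CC}$; (ii) that a source acting as the \emph{intended} recipient of a relayed packet harvests exactly as a passive overhearing node would, since DPS is applied identically regardless of whether the node decodes; and (iii) that the thinning of $\Phi_R$ into decoding and non-decoding relays is immaterial, because the full relay field radiates at $P_t$ in either case, so the aggregate power the source sees from $\Phi_R$ is that of the untouched Poisson field. Once these three points are argued, the additive structure $\bar{P}_{DPS_{c\iota}}=\sum_{\text{overheard sets}}\bar{P}_{DPS}(\lambda,r_c)$ is immediate and the claimed closed forms follow from Theorem~\ref{ahptheorem}.
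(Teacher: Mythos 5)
Your proposal is correct and follows essentially the same route as the paper: decompose the per-CP harvested power of a source (resp.\ relay) into the contributions of the two overheard sets, reuse the Theorem~\ref{ahptheorem} expression with the corresponding intensities $\lambda_{\hat\varphi},\lambda_R$ (resp.\ $\lambda_1,\lambda_2$), and then apply the RF-to-DC polynomial to the summed $\bar{P}_{DPS_{c\iota}}$. Your explicit justifications (DPS applied identically whether or not the node decodes, and non-decoding relays still radiating $P_t$ so the relay field remains an unthinned PPP) are sound and merely make explicit what the paper's proof leaves implicit.
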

\begin{proof}The same line of thought is followed for this proof as in Theorem \ref{ahptheorem}. However, for the cooperative case, the sources are assisted by a set of relays. Therefore, each source node receives on average energy from two sets (i.e., in one timeslot from the relay transmissions and in another timeslot from the transmissions of the other set of sources). Moreover, the relays are receiving the energy from the transmissions of the two source sets. Thus, the average harvested power while taking into account DPS and before the RF-to-DC conversion efficiency of an $S_\varphi$ source is 
\begin{equation}
\bar{P}_{DPS_{c\varphi}}=\bar{P}_{DPS_{d{\varphi}}}+\bar{P}_{DPS_{dR}},\label{avpowharvbefcinproof1}
\end{equation}
where $\bar{P}_{DPS_{dR}}$ can be derived from $\bar{P}_{DPS_{d{\varphi}}}$ using $\lambda_R$ as the intensity. For a relay node the average harvested power is
\begin{equation}
\bar{P}_{DPS_{cR}}=\bar{P}_{DPS_{d_1}}+\bar{P}_{DPS_{d_2}}.\label{avpowharvbefcinproofR}
\end{equation}
Substituting \eqref{avpowharvbefcinproof1} or \eqref{avpowharvbefcinproofR} to \eqref{avpowharvc} and following a procedure as in Theorem \ref{ahptheorem}, yields the respective actual average harvested power after applying the RF-to-DC conversion efficiency, which concludes the proof.
\end{proof}
Thus, by combining \eqref{avpowharvc} with \eqref{wmaxehcomb}, the maximum lifetime of a node with EH in the cooperative scenario can be derived. Similar to Remark \ref{minrem}, the average lifetime in the CC scenario is defined by the minimum between $\bar{P_{c1}}^{EH}$ and $\bar{P_{d2}}^{EH}$.

\section{Optimal Intensity and Performance metrics}\label{section_metrics}
In this section, we will introduce the optimal intensity, which provides an accurate estimation of the number of nodes per unit area needed to achieve the highest possible lifetime for the network, and two metrics that are useful for evaluating the performance of the network, i.e., the spatial throughput that indicates the average number of messages exchanged per unit area and the total messages exchanged on average.
 \subsection{Optimal Intensity}
 In previous works with WEH networks that do not take into account the RF-to-DC conversion efficiency, the network intensity is a monotonic function of the average harvested power. However, in a more realistic approach where the antennas are not ideal, as the network intensity and, thus, the interference 
 increases, the average harvested power rises to a local maximum and then decreases due to the low RF-to-DC conversion efficiency. Therefore, it is important to know the network topology characteristics such as the intensity of the transmitting set of nodes that achieves the maximum average harvested power for the receiving set of nodes. The optimization problem considered can be described as
 \begin{equation}
\begin{aligned}
& \underset{\lambda_{\hat{\varphi}}}{\text{max}}
& & \bar{P_{d\varphi}}^{EH}(\lambda_{\hat{\varphi}}) \\
& \text{s.t.} & &  \lambda_{\hat{\varphi}}\geq0 \\
& & &  0\leq\epsilon(P_I)\leq1 \\
\end{aligned}
\end{equation}
 and a solution of this problem is given in the following Lemma.
 
 \begin{lemma}\label{loptlemma}
 The optimal intensity $\lambda_{opt}$ to achieve maximum lifetime in a network with DPS and RF-to-DC conversion efficiency described by \eqref{powersplitfunction} and \eqref{conveff}, respectively, is given by
 \begingroup
\footnotesize
 \begin{equation}
\lambda_{opt}\hspace{-0.5ex}=\hspace{-0.5ex}\frac{\mu(\alpha-2)10^{-\frac{\alpha_2}{30\alpha_3}}}{10^3P_t\pi\alpha e^{-\mu\psi+1}}\exp\hspace{-0.7ex}{\bigg[\frac{2^{2/3}f}{60\alpha_3}+\frac{\ln{^210}(\alpha_2^2-3\alpha_1\alpha_3)+900\alpha_3^2}{2^{-4/3}60\alpha_3f}\bigg]},\nonumber
\end{equation}\endgroup
where
\begin{equation}
f=\sqrt[3]{-\rho+\sqrt{\rho^2-4\big(\ln^210(\alpha_2^2-3\alpha_1\alpha_3)+900\alpha_3^2\big)^3}}\nonumber
\end{equation}
and
\begin{equation}
\rho=\ln^310\big(27\alpha_0\alpha_3^2-9\alpha_1\alpha_2\alpha_3+2\alpha_2^3\big)+54000\alpha_3^3\nonumber
\end{equation}
 \end{lemma}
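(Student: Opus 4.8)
The plan is to reduce the optimization to one-dimensional calculus and then apply Cardano's formula. First I would invoke Theorem~\ref{ahptheorem}: in the interference-limited regime the $\psi e^{-\mu\psi}\mathrm{Ei}[-\mu\psi][\,\cdot\,]$ correction term in $\bar{P}_{DPS_{d\varphi}}$ is negligible against the leading term, so $\bar{P}_{DPS_{d\varphi}}$ is, to the accuracy of the model, linear in the transmitting intensity,
\[
\bar{P}_{DPS_{d\varphi}}=\kappa\,\lambda_{\hat{\varphi}},\qquad \kappa=\frac{P_t\,e^{-\mu\psi}\,\pi\alpha}{\mu(\alpha-2)} .
\]
Hence $\bar{P}_{\log}=10\log_{10}\!\big(\kappa\lambda_{\hat{\varphi}}/1\mathrm{mW}\big)$ is strictly increasing and invertible in $\lambda_{\hat{\varphi}}>0$, so maximizing $\bar{P_{d\varphi}}^{EH}(\lambda_{\hat{\varphi}})$ in \eqref{avpowharv} is equivalent to maximizing it over the new variable $x:=\bar{P}_{\log}$. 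Writing $\bar{P}_{DPS_{d\varphi}}=10^{(x-30)/10}$ in Watts (because $1\mathrm{mW}=10^{-3}\mathrm{W}$), the objective becomes $g(x)=10^{(x-30)/10}\big(a_3x^3+a_2x^2+a_1x+a_0\big)$.

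Next I would solve $g'(x)=0$. Since $\frac{d}{dx}10^{(x-30)/10}=\frac{\ln 10}{10}\,10^{(x-30)/10}$ and the exponential factor never vanishes, the stationarity condition reduces to the cubic
\[
a_3x^3+\Big(a_2+\tfrac{30a_3}{\ln 10}\Big)x^2+\Big(a_1+\tfrac{20a_2}{\ln 10}\Big)x+\Big(a_0+\tfrac{10a_1}{\ln 10}\Big)=0 .
\]
I would depress it by the shift $x=t-\frac{a_2}{3a_3}-\frac{10}{\ln 10}$ (so that $\frac{x-30}{10}=\frac{t}{10}-\frac{a_2}{30a_3}-\frac{1}{\ln 10}-3$), obtaining a reduced cubic $t^3+pt+q=0$ whose coefficients $p,q$, after clearing denominators by the appropriate powers of $\ln 10$ and of $3a_3$, are exactly the quantities that assemble $\rho$ and the radicand $\rho^2-4(\,\cdot\,)^3$ of the statement. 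Cardano's formula then delivers the relevant real root $t^{*}$ in single-radical form, using the identity $\sqrt[3]{u}+\sqrt[3]{v}=\sqrt[3]{u}-p/(3\sqrt[3]{u})$ that holds when $uv=-p^3/27$, so that only one cube root — namely $f=\sqrt[3]{-\rho+\sqrt{\rho^2-4(\,\cdot\,)^3}}$ — survives.

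Finally I would undo the two substitutions. From the shift above, $10^{(x^{*}-30)/10}=10^{t^{*}/10}\cdot10^{-a_2/(30a_3)}\cdot10^{-1/\ln 10}\cdot10^{-3}$, and since $10^{-1/\ln 10}=e^{-1}$ the factors $10^{3}$, $10^{-\alpha_2/(30\alpha_3)}$ and the $e^{+1}$ inside $e^{-\mu\psi+1}$ appear once we divide by $\kappa$ to recover $\lambda_{opt}=\bar{P}_{DPS_{d\varphi}}/\kappa$; the leftover $10^{t^{*}/10}$, rewritten as $\exp[\tfrac{\ln 10}{10}t^{*}]$ with the $\ln 10$ powers absorbed into $f$, is precisely the displayed exponential. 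The step I expect to be the real obstacle is this bookkeeping — tracking the powers of $\ln 10$, the factor $3a_3$ and the $2^{2/3}/2^{-4/3}$ scaling so that the bulky symbolic expressions for $\rho$, $f$ and the exponent line up — together with checking (by the sign of $g''$ at $x^{*}$, equivalently via the feasibility requirement $0\le\epsilon(P_I)\le 1$) that the retained root is the maximizer and not a minimizer or an inflection point. This last point holds because $\epsilon(\,\cdot\,)$ is the unimodal ``bump'' of Fig.~\ref{rfdceff}, so $g$ has a single interior maximum on the admissible power range while any other real roots of the cubic fall outside it.
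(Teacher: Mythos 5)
Your proposal is correct and takes essentially the same route as the paper's (much terser) proof, which simply argues uniqueness of the interior maximum from the monotonicity of $\bar{P}_{DPS_{d\varphi}}$ in the intensity and the shape of $\epsilon(\cdot)$, and then solves the first-order condition $\partial \bar{P}_{d\varphi}^{EH}/\partial\lambda_{\hat{\varphi}}=0$; your dB change of variable, shifted cubic and Cardano step are exactly the omitted algebra (your shifted coefficients reproduce precisely $\rho$ and $\ln^2\!10\,(\alpha_2^2-3\alpha_1\alpha_3)+900\alpha_3^2$). The only point worth noting is that you make explicit the linearization $\bar{P}_{DPS_{d\varphi}}\approx \pi\alpha P_t e^{-\mu\psi}\lambda_{\hat{\varphi}}/\big(\mu(\alpha-2)\big)$, i.e.\ dropping the $\mathrm{Ei}[-\mu\psi]$ correction term, which the paper leaves implicit but which is indeed required for the stated closed form.
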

 \begin{proof}
 The proof of Lemma \ref{loptlemma} is provided in Appendix E.
 \end{proof}
 \begin{remark}
 It should be noted that the optimal intensity of the $S_1$ source nodes calculated using Lemma \ref{loptlemma} maximizes the lifetime of the $S_2$ set of nodes. Similarly, the optimal intensity of the $S_2$ set of nodes maximizes the lifetime of the $S_1$ set.
 \end{remark}
 \subsection{ST and TME}
The probability of successful exchange derived in Section \ref{section_pex} for all scenarios is a throughput metric for the link under examination. In order to have a complete picture of the network performance, we employ the metric of spatial throughput \cite[5.3.1]{book}, \cite{spatial}, which provides an average of the throughput over all the links in the network. Hence, the spatial throughput (messages/s/unit-area) of the network is defined as
\begin{equation}
S_{sc}=\frac{(\lambda_1+\lambda_2)p_{sc}}{g_{sc}t_s}\text{ (messages/s/unit-area)}\label{spatthr},
\end{equation}
where $sc=\{DC, CC\}$, $p_{sc}$ and $g_{sc}$ denote the successful exchange probability and the number of slots per scenario, respectively.

Finally, another metric that can be deduced using the spatial throughput is the average total messages exchanged in a lifetime per unit area (TME), which is given by multiplying the spatial throughput with the network lifetime and the number of slots per CP for each scenario. TME can be written as
\begin{equation}
\text{TME}_{sc}=S_{sc}w_{sc}g_{sc}\text{ (messages/unit-area)}\label{tme},
\end{equation}
where $w_{sc}$ denotes the network lifetime for the various scenarios derived in Section \ref{section_lifetime}. In the following section, we will present and validate the numerical results of all the metrics that have been presented so far.
\section{Analytical and Simulation Results}\label{section_results}
In this section, we validate the proposed theoretical framework via extensive simulations and provide useful insights on the use of WEH by comparing the metrics of interest for the different communication scenarios.

\subsection{Simulation Setup}
We compare the two proposed scenarios, direct and simple cooperative without EH (DC and CC, respectively) and with EH (DC-EH and CC-EH, respectively). For high accuracy, we create 10.000 realizations of a $500$ m by $500$ m area with intensities varying from $0.01$ to $0.5$ devices per m$^2$ (i.e., the number of devices per realization is from $3.000$ up to $150.000$). The time slot duration is denoted as $t_s$ and depends on the application scenario and the chosen bitrate. The transmit power is $P_t=75$ mW, while the power for the reception mode is $P_r=100$ mW \cite{ptpr} and the initial level of a node's battery is $L_I=1000$~J. Additionally, the path loss exponent is chosen to be $\alpha=4$, although it is possible to use any value $\alpha>2$. For the model validation, the channel fading gain is set to $\mu=1$ and the noise power to $N=-124$~dBm for $100$~kHz system bandwidth for all scenarios, unless otherwise stated, while we vary the values of decoding threshold $\gamma^*$ and intensity $\lambda$ in order to present the performance of the system under different conditions.  In addition, if not explicitly stated otherwise, the decoding threshold is fixed at $\gamma^*=0$ dB and the intensity $\lambda$ of the PPPs is set to $\lambda_1=0.1$, $\lambda_2=0.5$ and $\lambda_R=0.25$.
\begin{figure}[!t]
\centering
\includegraphics[width=1\columnwidth]{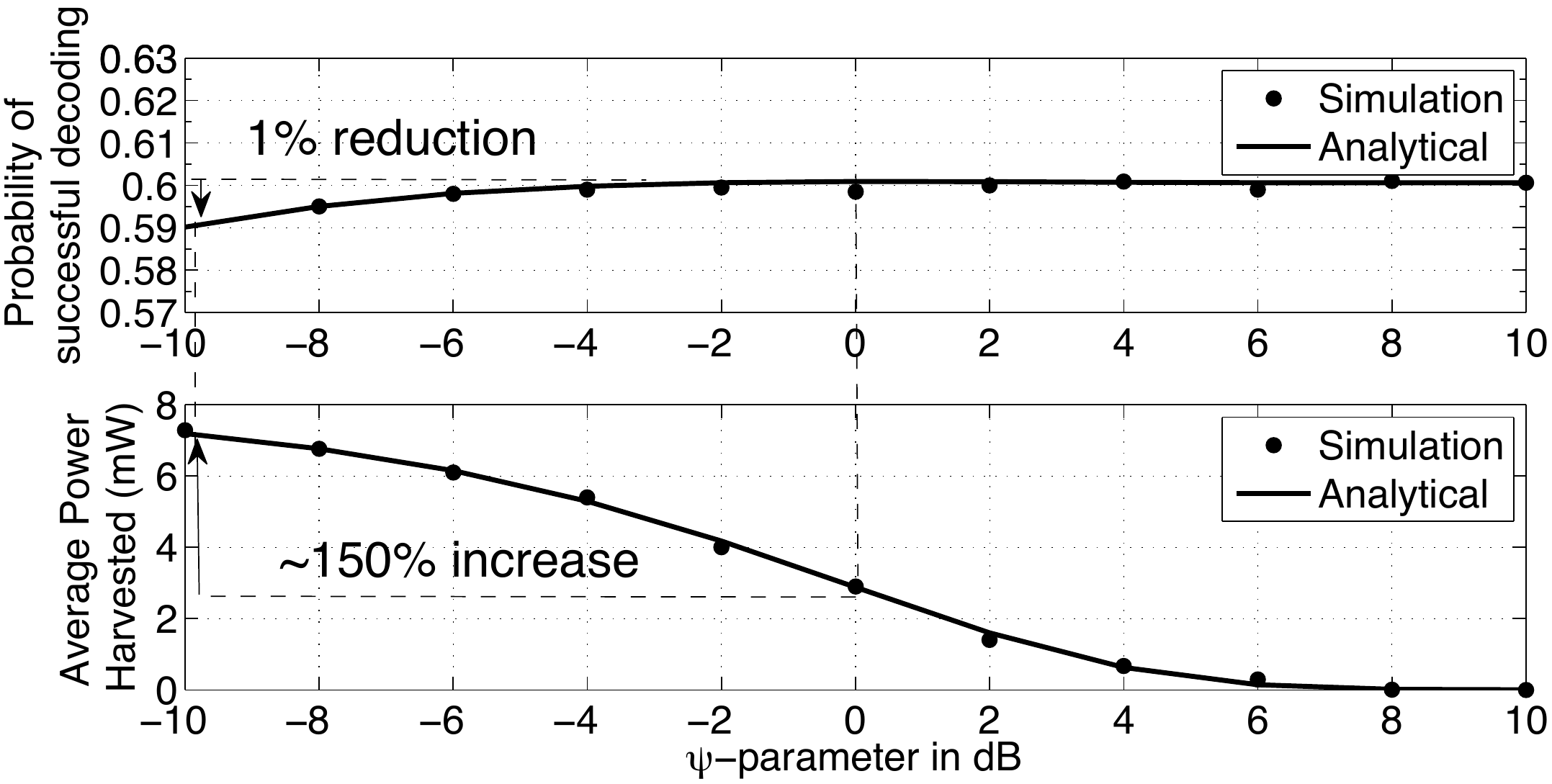}
\caption{Comparison of probability $p_{DC}$ and average harvested power $\bar{P}_d^{EH}$ versus the $\psi$-parameter.}
\label{final11}
\end{figure}

Moreover, in all the experiments, the coefficients for the RF-to-DC conversion efficiency $\epsilon$ given in \eqref{conveff} are $\alpha_3=-4.6\cdot10^{-5}$, $\alpha_2=-7.8\cdot10^{-4}$, $\alpha_1=0.03$ and $\alpha_0=0.62$, according to \cite{rectenna2} for the case of $940$ MHz. Regarding the $\psi$-parameter in \eqref{powersplitfunction}, since it defines the amount of power that is split between the harvester and the information receiver, it can be chosen in a way that increases the average harvested power without affecting the probability of successful exchange. In Fig.~\ref{final11}, we provide the relation of $\psi$ with the two metrics (i.e., probability of successful decoding and average harvested power). It can be observed that by sacrificing only $1\%$ in the probability of successful decoding, the average harvested power is increased by $\sim150\%$. This is due to the fact that the probability of exchange drops with a low rate as $\psi$ is reduced, while the average harvested power rises with a much higher rate. Therefore, in our experiments, the $\psi$-parameter has been fixed at $-10$ dB or $\psi=0.1$. 
\subsection{Model Validation and Performance Evaluation}

In this section, we validate the basic metrics (i.e., probability of successful message exchange and average harvested power) of our analysis, that are used for the derivations of the end-to-end QoS and lifetime metrics. In Fig.~\ref{final12}, we plot the probability of successful message exchange for the direct and cooperative communication scenarios versus the decoding threshold $\gamma^*$.
\begin{figure}[!t]
\centering
\includegraphics[width=1\columnwidth]{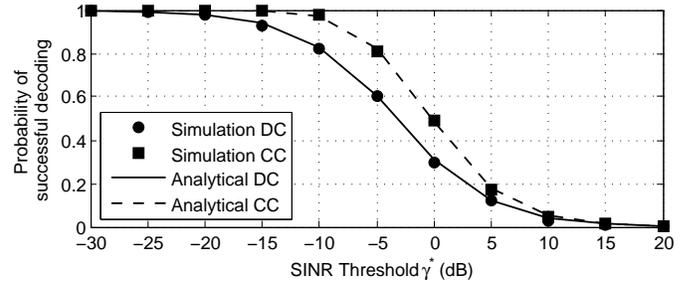}
\caption{Probability of successful message exchange vs. decoding threshold $\gamma^*$ for the direct and cooperative scenarios.}
\label{final12}
\end{figure}
As we can see, the probability $p_{DC}$ matches perfectly with the simulations and, thus, Theorem~1 is validated. Furthermore, the probability $p_{DC}$ becomes lower as the decoding threshold increases. This result can be justified by the fact that, for higher decoding thresholds, the received signal must be much stronger than the interference plus noise. Similar conclusions can be derived in the result for the cooperative communication scenario. As we can see, Lemma~2 is validated and the probability $p_{CC}$ decreases for higher decoding thresholds. By comparing the two curves, we can also notice that the probability of successful exchange is higher in the cooperative communication case compared to the direct one for the same decoding thresholds. This has been already proven in small-scale networks and with our study we extend this result even for large-scale networks with random relay deployment. Thus, thanks to diversity, there is a probability that the message exchange will take place via relay nodes, even if the direct communication fails.

\begin{figure}[!t]
\centering
\includegraphics[width=1\columnwidth]{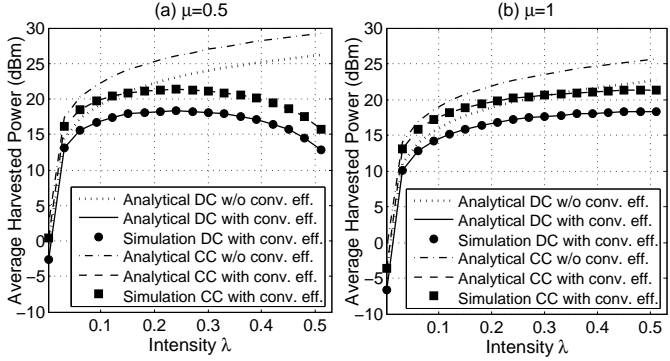}
\caption{Average harvested power vs. Intensity. (a) $\mu=0.5$, (b) $\mu=1$.}
\label{final145}
\end{figure}
In Fig.~\ref{final145}, we plot the average harvested power by a source in one CP versus the node intensity, considering two different cases for the channel conditions, a) favorable with $\mu=0.5$ and b)~moderate with $\mu=1$. One first straightforward observation from both figures is that, as the intensity increases until a certain point, the nodes harvest more power, due to the higher interference. Also, compared to Fig. \ref{final145}a, the results in Fig. \ref{final145}b need higher intensity to achieve the same average harvested power, because the fading conditions attenuate the received power and, thus, the average harvested power. However, it is very interesting to see that, after a peak value, the average harvested power is decreasing. This can be seen clearly in Fig.~\ref{final145}a and it stems from the fact that the RF-to-DC conversion efficiency of the rectennas, given in \eqref{conveff} and shown in Fig.~\ref{rfdceff}, decreases as the received power increases over a certain point. Indeed, to highlight the difference between the average harvested power with and without RF-to-DC conversion, we also plot in the same figure the cases without the conversion, which show the significant amount of energy that is lost due to the conversion (e.g., for $\mu=0.5$ and $\lambda=0.2$ in the DC scenario, the difference between the two cases is over 3 dBm). This is a very important insight which implies that i) adding more nodes in the network does not necessarily increase the lifetime of the network and ii) there is a unique maximum of the average harvested power according to the conditions of the system. In addition, by comparing the different communication scenarios in both figures, we notice that the cooperative scenario provides the highest amount of harvested power. This is due to the fact that, in this scenario, there are also relays that provide more energy to the system in one CP.
\begin{figure}[!t]
\centering
\includegraphics[width=1\columnwidth]{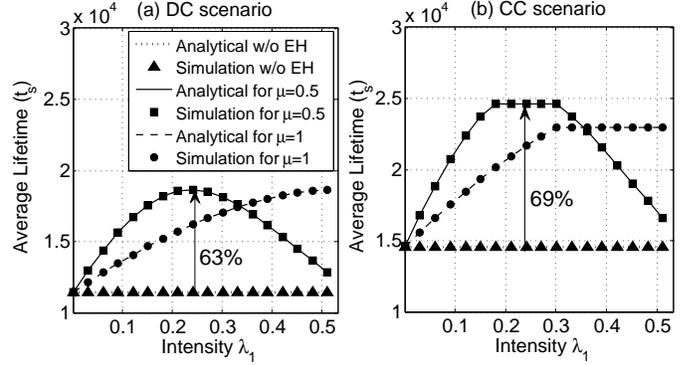}
\caption{Average Lifetime vs. Intensity: (a) Comparison between DC and DC-EH, (b) Comparison between CC and CC-EH.}
\label{al}
\end{figure}

In Fig.~\ref{al}, we present the average network lifetime with and without EH for both scenarios versus the intensity $\lambda_1$ of the $S_1$ source nodes. For the DC scenario (Fig. \ref{al}a), we assume that the intensity $\lambda_2$ is equal to the optimal intensity calculated using Lemma \ref{loptlemma} (i.e., $\lambda_2\simeq0.25$ for $\mu=0.5$ and $\lambda_2\simeq0.5$ for $\mu=1$). Similarly, for the CC scenario (Fig. \ref{al}b), we assume that the intensity of the relays is equal to the optimal ($\lambda_R=0.25$) and we set $\lambda_2=0.3$. As expected, EH increases the lifetime of the network, especially for the cooperative scenario, where the lifetime gains can reach up to $69$\%, compared to a gain of $63\%$ in Fig. \ref{al}a. The gains are higher for the CC scenario, because relays contribute to the average harvested energy during each CP compared to the DC scenario. Additionally, it can be noticed that, in the CC case, there is a limit in the average lifetime from the intensities between $0.2$ and $0.3$. This stems from the fact that $S_2$ sources cannot achieve higher lifetime than this limit (i.e., $\lambda_2=0.3$), which limits the lifetime of the whole network, as it is explained in Remark \ref{minrem}.

\begin{figure}[!t]
\centering
\includegraphics[width=1\columnwidth]{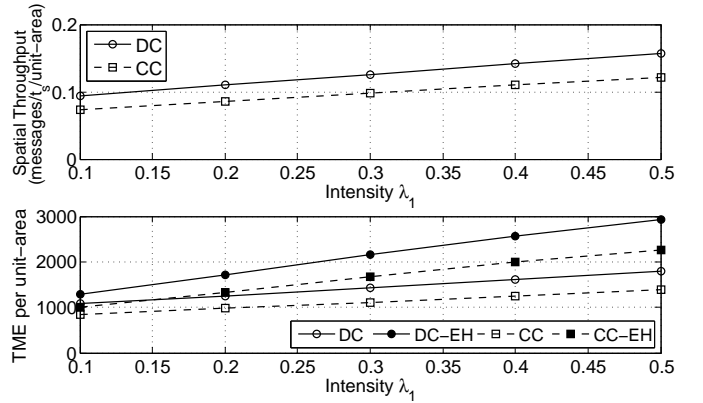}
\caption{(a) Spatial throughput vs. Intensity and (b) Successfully exchanged messages in a lifetime vs. Intensity for the different communication scenarios.}
\label{tmef}
\end{figure}
Having validated the analysis, we now present a performance evaluation for the two communication scenarios in Fig.~\ref{tmef}. In this figure, the simulation results appear as markers while the lines represent the analytical results. As depicted in Fig.~\ref{tmef}a, the spatial throughput increases with the intensity, since more nodes exchange messages per unit area. Moreover, it is interesting to notice that, although the probability of message exchange is always higher in the cooperative communication (see also Remark \ref{remdccc}), the spatial throughput for the cooperative scenario presents lower performance than the DC scenario. This can be justified by considering the randomness in the deployment of the relays and the longer CPs in the cooperative scenario. To clarify, although the performance gains from cooperation are obvious in a scenario where the relays are located in between the source nodes, this is not the case for randomly deployed networks. In such networks, it is possible for a direct link to provide better communication than a cooperative link, whereas the performance of the cooperative scenario is limited and depends on the random relay deployment. This fact in conjunction with the longer CPs in the cooperative scenario are the reasons that the message exchange rate of CC drops in comparison to the DC scenario.

Moreover, in Fig.~\ref{tmef}b, we combine the two metrics given in Fig.~\ref{al} and Fig.~\ref{tmef}a and estimate the number of successfully decoded messages during the network lifetime per unit area as a function of the intensity. From this figure, it is evident that the CC-EH scenario presents lower performance compared to the direct scenario with EH, which shows that the additional time slots in the CC scenario drop the performance. However, in Fig.~\ref{tmef}b, it is worth noting that the performance of the network through time is not taken into account. Since the battery capacity of the CC scenario is decreased through time with a lower rate than in the DC scenario, we could identify the trade-offs between the two scenarios while taking into account the total exchanged messages and the average lifetime.

\begin{figure}[!t]
\centering
\includegraphics[width=1\columnwidth]{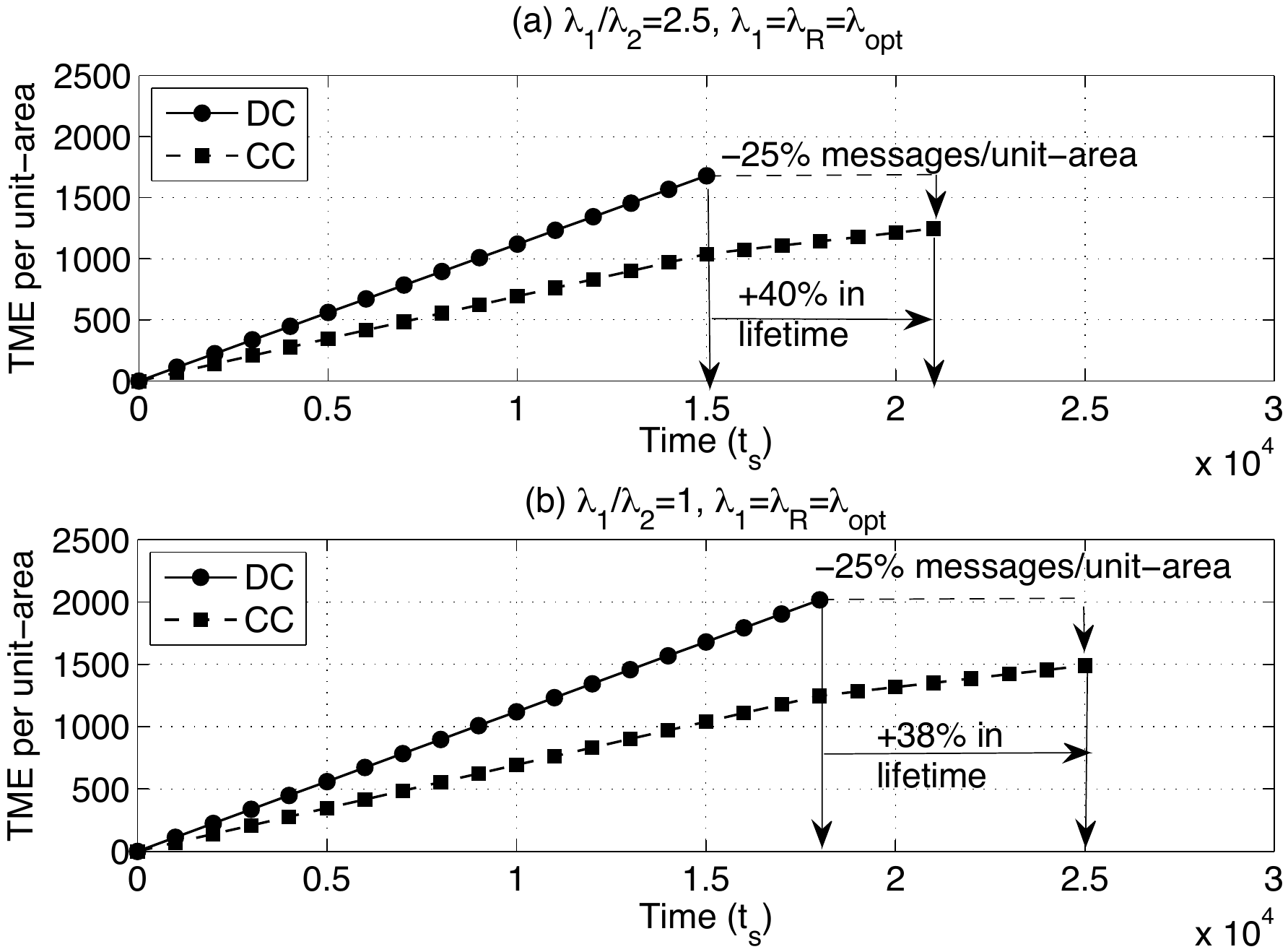}
\caption{Successfully exchanged messages per unit-area vs. Time for the different scenarios.}
\label{mpt}
\end{figure}

Finally, in Fig.~\ref{mpt}, we present the average exchanged messages per unit area versus time for two different intensity combinations (i.e., in Fig.~\ref{mpt}a, $\lambda_1=\lambda_R=\lambda_{opt}$ and $\lambda_2=\lambda_{opt}/2.5$ and in Fig.~\ref{mpt}b, $\lambda_1=\lambda_2=\lambda_R=\lambda_{opt}$). We observe that, in Fig. \ref{mpt}a, the network has lower lifetime compared to Fig. \ref{mpt}b, because the network lifetime is limited by the lower intensity of the $S_2$ set of source nodes. On the other hand, when all sets have the optimal intensity (Fig. \ref{mpt}b), the network lifetime is maximized. Moreover, it can be clearly seen that the communication scenarios present different trade-offs. For instance, in Fig. \ref{mpt}a, the DC scenario has higher number of exchanged messages but lower lifetime, while the CC scenario demonstrates higher lifetime ($+40$\%) with fewer exchanged messages ($-25$\%). Similarly, in Fig. \ref{mpt}b, the CC scenario demonstrates higher lifetime ($+38$\%) with again fewer exchanged messages ($-25$\%). 

To that end, the results in Fig.~\ref{tmef} and Fig.~\ref{mpt} reveal the counter-intuitive insight that the DC scenario presents better communication performance than the CC scenario in randomly deployed dense networks. Nevertheless, thanks to its higher lifetime, the CC scenario could be proved ideal for applications such as in cases where the nodes are embedded in buildings or bodies without easy access, where longevity is more important than high data rates. 
\section{Conclusion}\label{section_conclusion}
This paper has studied the impact of WEH on the information exchange in large-scale networks. The purpose of the randomly deployed WSN nodes is to exchange successfully their messages locally with their neighbors, either directly (direct communication scenario) or through a relay node (cooperative communication scenario). The different scenarios were compared in terms of message exchange probability, spatial throughput and network lifetime. The theoretical derivations were validated by extensive Monte-Carlo simulations. Finally, the comparison of the two scenarios highlighted the importance of WEH in large-scale networks and revealed that the direct communication scenario presents better communication performance than the cooperative scenario in randomly deployed dense networks. However, the cooperative scenario is more advisable in applications where longevity matters, since it is superior in terms of lifetime.


%
\appendices
\section{Proof of $\text{Pr}(\gamma>\gamma^*|h\geq\psi)$ in Theorem 1}
In this section, we will derive the probability $\text{Pr}(\gamma>\gamma^*|h\geq\psi)$. Conditioning on the nearest transmitting source at a distance $r$, the probability of successful message reception given that $h\geq\psi$ is given by
\begin{align}
&\text{Pr}(\gamma>\gamma^*|h\geq\psi)=\mathbb{E}_r[\text{Pr}(\gamma>\gamma^*|h\geq\psi,r)]=\nonumber\\
&=\int_0^{\infty}\text{Pr}(\gamma>\gamma^*|h\geq\psi,r)f_r(r)\text{d}r=\nonumber\\
&=\int_0^{\infty}\text{Pr}\bigg(\frac{P_thr^{-\alpha}v(\psi)}{v(\psi)I_r+N}>\gamma^*\Big|r\bigg)f_r(r)\text{d}r=\nonumber\\
&=\int_0^{\infty}\text{Pr}\bigg(h>\frac{\gamma^*r^{\alpha}I_r}{1-\phi\gamma^*r^{\alpha}}>\gamma^*\Big|r\bigg)f_r(r)\text{d}r,\nonumber
\end{align}
where $f_r(r)$ denotes the probability density function (PDF) of $r$, given in \cite[2.9.1]{book} and $\phi=N/(P_t\psi)$. Since $h$ follows an exponential distribution, we have
\begingroup
\footnotesize
\begin{align}
&\text{Pr}(\gamma>\gamma^*|h\geq\psi)=\int_0^{\infty}\mathbb{E}_{I_r}\bigg[\text{Pr}\bigg(h>\frac{\gamma^*r^{\alpha}I_r}{1-\phi\gamma^*r^{\alpha}}>\gamma^*\Big|r\bigg)f_r(r)\text{d}r=\nonumber\\
&=\int_0^{\infty}\mathbb{E}_{I_r}\bigg[\exp\bigg(-\frac{\mu\gamma^*r^{\alpha}}{1-\phi\gamma^*r^{\alpha}}I_r\bigg)\bigg|r\bigg]f_r(r)\text{d}r=\nonumber\\
&=\int_0^{\infty}\mathcal{L}_{I_r}\bigg(\frac{\mu\gamma^*r^{\alpha}}{1-\phi\gamma^*r^{\alpha}}\bigg)f_r(r)\text{d}r,\label{prbeflapl}
\end{align}\endgroup
where $\mathcal{L}_{I_r}(s)$ defines the Laplace transform of the interference. We aim to calculate the Laplace transform by applying the following steps:
\begingroup
\footnotesize
\begin{align}
&\mathcal{L}_{I_r}\bigg(\frac{\mu\gamma^*r^{\alpha}}{1-\phi\gamma^*r^{\alpha}}\bigg)=\mathbb{E}e^{-\big(\frac{\mu\gamma^*r^{\alpha}}{1-\phi\gamma^*r^{\alpha}}\big)I_r}=\mathbb{E}\bigg[\prod_{i\in\Phi/x}e^{-\big(\frac{\mu\gamma^*r^{\alpha}}{1-\phi\gamma^*r^{\alpha}}\big)\frac{h}{r^{\alpha}}}\bigg],\nonumber
\end{align}\endgroup
where $x$ denotes the transmitting source which is excluded from the aggregated interference. Since the fading is iid,
\begin{align}
&\mathcal{L}_{I_r}\bigg(\frac{\mu\gamma^*r^{\alpha}}{1-\phi\gamma^*r^{\alpha}}\bigg)=\mathbb{E}_{\Phi}\Bigg[\prod_{i\in\Phi/x}\mathbb{E}_h\Big[e^{-\big(\frac{\mu\gamma^*r^{\alpha}}{1-\phi\gamma^*r^{\alpha}}\big)hr^{-\alpha}}\Big]\Bigg]=\nonumber\\
&=\mathbb{E}_{\Phi}\Bigg[\prod_{i\in\Phi/x}\frac{\mu}{\mu+\big(\frac{\mu\gamma^*r^{\alpha}}{1-\phi\gamma^*r^{\alpha}}\big)r^{-\alpha}}\Bigg].\nonumber
\end{align}
Using the probability generating functional for the PPP $\Phi$ \cite[4.6]{book}, we obtain
\begingroup
\footnotesize
\begin{align}
&\mathcal{L}_{I_r}\bigg(\frac{\mu\gamma^*r^{\alpha}}{1-\phi\gamma^*r^{\alpha}}\bigg)=\exp{\bigg(-\int_r^{\infty}\frac{2\pi\lambda_{\varphi}}{1+u^{\alpha}(1/(\gamma^*r^{\alpha})-\phi)}u\text{d}u\bigg)}=\label{lapla1}\\
&=\exp{\Bigg(-2\pi\lambda_{\varphi}\frac{\big(\frac{1}{\gamma^*r^{\alpha}}-\phi\big)^{-2/\alpha}\text{ }_2F_1\big(1,\frac{\alpha-2}{\alpha};2-\frac{2}{\alpha};\frac{\gamma^*}{r^{\alpha}\gamma^*\phi-1}\big)}{(\alpha-2)\big(\frac{\gamma^*}{1-r^{\alpha}\gamma^*\phi}\big)^{\frac{2-\alpha}{\alpha}}}\Bigg)},\label{lapla}
\end{align}\endgroup
where the integral in \eqref{lapla1} is derived with the aid of a computational software program\footnote{Wolfram Research, Inc., \emph{Mathematica,} Version 10.0, Champaign, IL, 2014.} and the hypergeometric function $_2F_1(a,b;c;z)$ is valid for $|z|<1$ which holds for realistic WSN scenarios.
Combining \eqref{lapla} with \eqref{prbeflapl}, yields the result to the probability $\text{Pr}(\gamma>\gamma^*|h\geq\psi)$.

\section{Proof of Lemma 1}
For the special case that the path loss exponent is $\alpha=4$, Theorem \ref{firsttheorem} can be further simplified. Considering \eqref{pex_d2}, the probability $\text{Pr}(\gamma>\gamma^*|h<\psi)$ for $\alpha=4$ is derived in \cite{coverage}. Regarding $\text{Pr}(\gamma>\gamma^*|h\geq\psi)$, we will simplify it by  using Euler's transformation formula for the hypergeometric function $_2F_1$ \cite[15]{hypergeom}:
\begin{equation}\label{euler}
_2F_1(a,b;c;z)=(1-z)^{c-a-b}\cdot\text{}_2F_1(c-a,c-b;c;z).\nonumber
\end{equation}
Therefore, in our case 
\begin{equation}
_2F_1\bigg(1,\frac{1}{2};\frac{3}{2};\frac{\gamma^*P_t\psi}{P_t\psi-r^{4}\gamma^* N}\bigg)\hspace{-0.7ex}=\hspace{-0.5ex}_2F_1\bigg(\frac{1}{2},1;\frac{3}{2};\frac{\gamma^*P_t\psi}{P_t\psi-r^{4}\gamma^* N}\bigg)\nonumber
\end{equation}
Moreover, by applying the hypergeometric representation of $\arctan$
\begin{equation}
\frac{\arctan(z)}{z}=_2F_1\bigg(\frac{1}{2},1;\frac{3}{2};-z^2\bigg),\nonumber
\end{equation}
we obtain the result of Lemma \ref{lemmaa4}.

\section{Proof of Lemma 2}
In order to derive the probability $p_{CC}$ in \eqref{psdprop}, we need to calculate the probabilities $p_{CC_{R_{\varphi}}}$, $\varphi\in\{1,2\}$, that the relay has successfully received from a message from a type $\varphi$ source and delivered it to a node of the other type. This means that the probability $p_{CC_{R_{\varphi}}}$ is the product of two independent probabilities of successful direct communication transmissions, i) from an $S_{\varphi}$ source to the relay, denoted as $p_{S_{\varphi}\rightarrow R}$, and ii) from the relay to an $S_{\hat{\varphi}}$ source which we will denote as $p_{R\rightarrow S_{\hat{\varphi}}}$. As we have stated in the system model, the probability of successful transmission is defined as the probability that the SINR $\gamma$ measured at the nearest receiver is higher than a threshold $\gamma^*$. Since each of the single transmission of the CC scenario is described by the same principles as in the DC scenario, the probability $p_{DC_{R}}$ is derived following the same line of though and employing the same mathematical tools as in Section \ref{pexdir}. Hence, the probability $p_{S_{\varphi}\rightarrow R}$ is equal to the probability $p_{DC_{\varphi}}$, which is the probability that any node will decode successfully a message from its nearest $S_{\varphi}$ source. Moreover, $p_{R\rightarrow S_{\hat{\varphi}}}$ is also a direct transmission probability derived in the same way as $p_{DC_{\varphi}}$ using the intensity $\lambda_R$ instead of $\lambda_{\varphi}$.

Therefore, the probability of successful message delivery through a relay for the CC scenario is given by
\begin{equation}
p_{CC_{R_{\varphi}}}=p_{S_{\varphi}\rightarrow R}\cdot p_{R\rightarrow S_{\hat{\varphi}}}=p_{DC_R}\cdot p_{DC_{\varphi}}.\label{psr_eq_pexd2}
\end{equation}
Combining \eqref{psr_eq_pexd2} and \eqref{theorem1} with \eqref{psdprop}, we obtain the probability of successful message exchange in the cooperative communication scenario.

\section{Proof of Theorem 2}
According to the power splitter rule provided in \eqref{powersplitfunction}, the harvester receives $(1-v(\psi)\big)100\%$ of the total aggregated received power and only when $h_{c\hat{\varphi}}>\psi$, where $h_{c\hat{\varphi}}$ denotes the channel fading gain of the nearest transmitting node. For simplicity, we drop the $\varphi$ notation and for the rest $h_c=h_{c\hat{\varphi}}$, $\lambda=\lambda_{\hat{\varphi}}$, $r_c=r_{c\hat{\varphi}}$ and $\bar{P}_{DPS_{d}}=\bar{P}_{DPS_{d\varphi}}$. Therefore, the average harvested power of a source node at the DC scenario, while considering DPS, is provided by 
\begin{equation}\label{beflin}
\bar{P}_{DPS_d}=\text{Pr}(h_c>\psi)\cdot\mathbb{E}\Bigg\{\bigg(1-\frac{\psi}{h_c}\bigg)\sum_{i\in\Phi}P_th_ir_i^{-\alpha}\bigg|h_c>\psi\Bigg\}.
\end{equation}
Using the linearity property of the expected value on \eqref{beflin} and considering that $h_c$ follows an exponential distribution with mean value $1/\mu$, we get
\begingroup
\footnotesize
\begin{equation}\label{aflin}
\bar{P}_{DPS_d}=\frac{P_t}{e^{\mu\psi}}\cdot\Bigg(\mathbb{E}\bigg\{\underbrace{\sum_{i\in\Phi}\frac{h_i}{r_i^{\alpha}}\Big|h_c>\psi}_\text{A}\bigg\}-\psi\mathbb{E}\bigg\{\underbrace{\sum_{i\in\Phi}\frac{h_i}{h_cr_i^{\alpha}}\bigg|h_c>\psi}_\text{B}\bigg\}\Bigg).
\end{equation}\endgroup
To derive the expected values of \eqref{aflin}, we could employ Campbell's theorem on sums \cite[4.2]{book}. However, the expected values in \eqref{aflin} are conditioned on $h_c$, which means that the channel fading channel of the nearest transmitter has to be higher than a certain $\psi$ value, i.e., $h_c>\psi$ for $i=c$. Hence, in order to be able to apply Campbell's theorem, we will employ the following procedure. By expanding the sum $A$ in the expected value, we obtain
\begin{align}
&\mathbb{E}\bigg\{\underbrace{\sum_{i\in\Phi}h_ir_i^{-\alpha}\Big|h_c>\psi}_\text{A}\bigg\}=\nonumber\\
&=\mathbb{E}\Big\{h_1r_1^{-\alpha}+\dots+h_cr_c^{-\alpha}+\dots+h_ir_i^{-\alpha}\Big|h_c>\psi\Big\}=\nonumber\\
&=\mathbb{E}\big\{h_1r_1^{-\alpha}\big\}+\dots+\mathbb{E}\big\{h_cr_c^{-\alpha}|h_c>\psi\big\}+\dots+\mathbb{E}\big\{h_ir_i^{-\alpha}\big\}.\label{onlyc}
\end{align}
As it can be seen in \eqref{onlyc}, only the received power of the nearest transmitter is affected by this condition. Thus, by adding and subtracting an equivalent average received power without the channel fading gain condition for this term, we obtain
\begin{align}
&\mathbb{E}\bigg\{\underbrace{\sum_{i\in\Phi}\frac{h_i}{r_i^{\alpha}}\Big|h_c>\psi}_\text{A}\bigg\}=\mathbb{E}\bigg\{\frac{h_1}{r_1^{\alpha}}\bigg\}+\dots+\mathbb{E}\bigg\{\frac{h_c}{r_c^{\alpha}}|h_c>\psi\bigg\}+\nonumber\\
&+\Big(\mathbb{E}\big\{h_cr_c^{-\alpha}\big\}-\mathbb{E}\big\{h_cr_c^{-\alpha}\big\}\Big)+\dots+\mathbb{E}\big\{h_ir_i^{-\alpha}\big\}=\nonumber\\
&=\mathbb{E}\big\{h_1r_1^{-\alpha}\big\}+\dots+\mathbb{E}\big\{h_cr_c^{-\alpha}\big\}+\dots+\mathbb{E}\big\{h_ir_i^{-\alpha}\big\}+\nonumber\\
&+\mathbb{E}\big\{h_c|h_c>\psi\big\}\mathbb{E}\big\{r_c^{-\alpha}\big\}-\mathbb{E}\big\{h_c\big\}\mathbb{E}\big\{r_c^{-\alpha}\big\}=\nonumber\\
&=\mathbb{E}\bigg\{\sum_{i\in\Phi}h_ir_i^{-\alpha}\bigg\}+\mathbb{E}\big\{r_c^{-\alpha}\big\}\cdot\Big(\mathbb{E}\big\{h_c|h_c>\psi\big\}-\mathbb{E}\big\{h_c\big\}\Big).
\label{onlycfix}
\end{align}
In \eqref{onlycfix} the expectation of the sum can be easily derived using Campbell's theorem. Moreover, the conditional expectation of the exponentially distributed RV in \eqref{onlycfix} is given by
\begin{align}
\mathbb{E}\big\{h_c|h_c>\psi\big\}&=\frac{\int_0^{\infty}h_c\mu e^{-\mu h_c}\mathds{1}(h_c>\psi)\text{d}h_c}{\int_0^{\infty}\mu e^{-\mu h_c}\mathds{1}(h_c>\psi)\text{d}h_c}=\nonumber\\
&=\frac{\int_{\psi}^{\infty}h_c e^{-\mu h_c}\text{d}h_c}{\int_{\psi}^{\infty} e^{-\mu h_c}\text{d}h_c}=\frac{1+\psi\mu}{\mu},\nonumber
\end{align}
 where $\mathds{1}(h_c>\psi)$ is the indicator function. Thus, by applying Campbell's theorem on sums and due to the independence between the RVs $h_i$ and $r_i$, \eqref{onlycfix} yields
\begin{align}\label{aftercamp}
&\mathbb{E}\bigg\{\underbrace{\sum_{i\in\Phi}h_ir_i^{-\alpha}\Big|h_c>\psi}_\text{A}\bigg\}=\frac{1}{\mu}\mathbb{E}\bigg\{\sum_{i\in\Phi}r_i^{-\alpha}\bigg\}+\psi\mathbb{E}\big\{r_c^{-\alpha}\big\}=\nonumber\\
&=\frac{\lambda}{\mu}\int_{\mathbb{R}_d}r^{-\alpha}\text{d}r+\psi\mathbb{E}\big\{r_c^{-\alpha}\big\}=\frac{\pi\alpha\lambda}{\mu(\alpha-2)}+\psi\mathbb{E}\big\{r_c^{-\alpha}\big\}.
\end{align}
Following a similar procedure as in \eqref{onlycfix}, sum $B$ in \eqref{aflin} is given by
\begingroup
\footnotesize
\begin{align}
&\mathbb{E}\bigg\{\hspace{-0.5ex}\underbrace{\sum_{i\in\Phi}\frac{h_i}{h_cr_i^{\alpha}}\bigg|h_c>\psi}_\text{B}\hspace{-1ex}\bigg\}\hspace{-0.5ex}=\mathbb{E}\bigg\{\hspace{-0.5ex}\frac{h_1}{h_cr_1^{\alpha}}\bigg|h_c>\psi\hspace{-0.5ex}\bigg\}\hspace{-0.5ex}+\dots+\mathbb{E}\bigg\{\hspace{-0.5ex}\frac{h_c}{h_cr_c^{\alpha}}\bigg|h_c>\psi\hspace{-0.5ex}\bigg\}+\nonumber\\
&+\dots+\mathbb{E}\bigg\{\frac{h_i}{h_c}r_i^{-\alpha}\bigg|h_c>\psi\bigg\}=\mathbb{E}\bigg\{h_1r_1^{-\alpha}\bigg\}\mathbb{E}\bigg\{\frac{1}{h_c}\bigg|h_c>\psi\bigg\}+\nonumber\\
&+\dots+\mathbb{E}\big\{r_c^{-\alpha}\big\}+\bigg(\mathbb{E}\bigg\{h_cr_c^{-\alpha}\bigg\}\mathbb{E}\bigg\{\frac{1}{h_c}\bigg|h_c>\psi\bigg\}-\nonumber\\
&-\mathbb{E}\bigg\{\frac{h_c}{r_c^{\alpha}}\bigg\}\mathbb{E}\bigg\{\frac{1}{h_c}\bigg|h_c>\psi\bigg\}\bigg)+\dots+\mathbb{E}\bigg\{h_ir_i^{-\alpha}\bigg\}\mathbb{E}\bigg\{\frac{1}{h_c}\bigg|h_c>\psi\bigg\}=\nonumber\\
&=\mathbb{E}\bigg\{\sum_{i\in\Phi}h_ir_i^{-\alpha}\bigg\}\mathbb{E}\bigg\{\frac{1}{h_c}\bigg|h_c>\psi\bigg\}-\mathbb{E}\big\{h_cr_c^{-\alpha}\big\}\mathbb{E}\bigg\{\frac{1}{h_c}\bigg|h_c>\psi\bigg\}+\nonumber\\
&+\mathbb{E}\big\{r_c^{-\alpha}\big\}=\mathbb{E}\bigg\{\frac{1}{h_c}\bigg|h_c>\psi\bigg\}\cdot\bigg(\frac{\pi\alpha\lambda}{\mu(\alpha-2)}-\frac{1}{\mu}\mathbb{E}\big\{r_c^{-\alpha}\big\}\bigg)+\mathbb{E}\big\{r_c^{-\alpha}\big\}\label{sumB}
\end{align}\endgroup
Once again, the conditional probability in \eqref{sumB} is given by
\begingroup
\footnotesize
\begin{equation}\label{conexp2}
\mathbb{E}\bigg\{\frac{1}{h_c}\bigg|h_c>\psi\bigg\}=\frac{\int_0^{\infty}\frac{1}{h_c}\mu e^{-\mu h_c}\mathds{1}(h_c>\psi)\text{d}h_c}{\int_0^{\infty}\mu e^{-\mu h_c}\mathds{1}(h_c>\psi)\text{d}h_c}=-\mu e^{\mu\psi}Ei[-\mu\psi],
\end{equation}\endgroup
where $\text{Ei}[x]=-\int_{-x}^{\infty}\frac{e^{-t}}{t}\text{d}t$ for nonzero values of $x$ denotes the exponential integral. Moreover, the expected value of the path loss to the nearest transmitter $\mathbb{E}\{r_{c}^{-\alpha}\}$ is provided by
\begingroup
\footnotesize
\begin{equation}\label{rcalphaexp}
\mathbb{E}\{r_c^{-\alpha}\}=\int_0^{\infty}r_c^{-\alpha}f_{r_c}(r_c)\text{d}r_c=1-e^{-\lambda\pi}+\int_1^{\infty}r_c^{1-\alpha}2\pi\lambda e^{-\pi\lambda r_c^2}\text{d}r_c,
\end{equation}\endgroup
where $f_{r_c}(r_c)$ denotes the PDF of the distance to the nearest neighbor, given in \cite[2.9.1]{book}. The integral in \eqref{rcalphaexp} can be solved for any value of $\alpha>2$, e.g.:
\begingroup
\footnotesize
\begin{align}\label{alphas}
&\alpha=3: &\mathbb{E}\{r_c^{-3}\}=&1-e^{-\lambda\pi}+2\lambda\pi\bigg (e^{-\pi\lambda}-\pi\sqrt{\lambda}\cdot\text{Erfc}\Big[\sqrt{\lambda\pi}\Big]\bigg)\nonumber\\
&\alpha=4: &\mathbb{E}\{r_c^{-4}\}=&1-e^{-\lambda\pi}+\lambda\pi\Big(e^{-\lambda\pi}+\lambda\pi\cdot\text{Ei}[-\lambda\pi]\Big)\nonumber\\
&\alpha=5: &\mathbb{E}\{r_c^{-5}\}=&1-e^{-\lambda\pi}+\frac{2}{3}\lambda\pi\bigg(\frac{1-2\lambda\pi}{e^{\lambda\pi}}+\frac{2\pi^2}{\lambda^{-\frac{3}{2}}}\cdot\text{Erfc}\Big[\sqrt{\lambda\pi}\Big]\bigg)\nonumber
\end{align}\endgroup
where $\text{Erfc}[x]=(\frac{2}{\sqrt{\pi}})\int_x^{\infty}e^{-t^2}\text{d}t$ denotes the complementary error function.

Combining \eqref{aftercamp}, \eqref{sumB} and \eqref{conexp2} into \eqref{aflin}, the proof is concluded.
\section{Proof of Lemma 4}
Due to i) the fact that $\bar{P}_{DPS_{d}}$ is monotonically increasing with the intensity and ii) the concave nature of \eqref{conveff}, we know that there is one local maximum for $\lambda_{\hat{\varphi}}>0$ and $0\leq\epsilon(P_I)\leq1$. Therefore, by taking the derivative of $\bar{P}_{{d{\varphi}}}^{EH}$ in Theorem \ref{ahptheorem} with respect to $\lambda_{\hat{\varphi}}$ and solving the equation
\begin{equation}
\frac{\vartheta\bar{P}_{{d{\varphi}}}^{EH}(\lambda_{\hat{\varphi}})}{\vartheta{\lambda_{\hat{\varphi}}}}=0,
\end{equation}
we obtain the value of $\lambda_{\hat{\varphi}}>0$ for which the lifetime of the network is maximized.

\ifCLASSOPTIONcaptionsoff
  \newpage
\fi

\end{document}